\def\ps@headings{%
\def\@oddhead{\mbox{}\scriptsize\rightmark \hfil \thepage}%
\def\@evenhead{\scriptsize\thepage \hfil \leftmark\mbox{}}%
\def\@oddfoot{}%
\def\@evenfoot{}}
\newcommand{\algorithmicfunction}{\textbf{function }}
\newcommand{\algorithmicendfunction}{\textbf{end}}
\newtheorem{theorem}{Theorem}[section]
\newtheorem{example}{Example}[section]
\newtheorem{remark}{Remark}[section]
\newtheorem{corollary}{Corollary}[section]
\begin{document}

\title{Green Broadcast Transmission in Cellular Networks: A Game Theoretic Approach\thanks{This work was supported by the ECOSCells project. It was done within the INRIA \& Alcatel Lucent Bell-Labs joint research laboratory.}}
\author{\IEEEauthorblockN{Cengis Hasan$^{\dag}$, Jean-Marie Gorce$^{\dag}$ and Eitan Altman$^{\ddag}$} \\ \small
\IEEEauthorblockA{$^{\dag}$Inria, University of Lyon, INSA-Lyon, 6 Avenue des Arts 69621 Villeurbanne Cedex, France\\
$^{\ddag}$Inria, 2004 Route des Lucioles, 06902 Sophia-Antipolis Cedex, France \\
$\{$cengis.hasan, eitan.altman$\}$@inria.fr, $\{$jean-marie.gorce$\}$@insa-lyon.fr
}
}

\date{}
\maketitle


\begin{abstract}
This paper addresses the mobile assignment problem in a multi-cell broadcast transmission seeking minimal total power consumption by considering both transmission and operational powers. While the large scale nature of the problem entails to find distributed solutions, game theory appears to be a natural tool. We propose a novel distributed algorithm based on group formation games, called \textit{the hedonic decision algorithm}. This formalism is constructive: a new class of group formation games is introduced where the utility of players within a group is separable and symmetric being a generalized version of parity-affiliation games. The proposed hedonic decision algorithm is also suitable for any set-covering problem. To evaluate the performance of our algorithm, we propose other approaches to which our algorithm is compared. We first develop a centralized recursive algorithm called \textit{the hold minimum} being able to find the optimal assignments. However, because of the NP-hard complexity of the mobile assignment problem, we propose a centralized polynomial-time heuristic algorithm called \textit{the column control} producing near-optimal solutions when the operational power costs of base stations are taken into account. Starting from this efficient centralized approach, a \textit{distributed column control algorithm} is also proposed and compared to \textit{the hedonic decision algorithm}. We also implement the nearest base station algorithm which is very simple and intuitive and efficiently manage fast-moving users served by macro BSs. Extensive simulation results are provided and highlight the relative performance of these algorithms. The simulated scenarios are done according to Poisson point processes for both mobiles and base stations.
\end{abstract}
\begin{keywords}
broadcast transmission, green networking, combinatorial optimization, game theory
\end{keywords}

\section{Introduction}
Broadcast scenarios have been widely studied for video or audio broadcasting. More recently, the Multimedia Broadcast/Multicast Service (MBMS) \cite{LTEbook} became a requirement of the Long-term Evolution (LTE) specifications to support the delivery of broadcast/multicast data in LTE systems. Broadcast and multicast downlink transmissions make no significant difference at the physical layer. Basically, broadcast services are available to all users without the need of subscribing to a particular service. Therefore, multicasting can thus be seen as ``broadcast via subscription'', with the possibility of charging for the subscription \cite{LTEbook}.
MBMS is intended to be used for some content, such  as streaming transmission of a sport or cultural event, but broadcasting may also be of interest to transmit some signalling such as a beacon for time synchronization or for power control purposes.

We consider broadcasting under a green-aware objective aiming at reducing the energy consumption which is an important issue in wireless environments\cite{surveyofgreennetworking}. Broadcasting may bring a strong improvement in wireless channels since a common resource (in frequency and/or time) may be used for all destinations. The transmission cost for a base station (BS) to reach all nodes in a multicast group is assumed to be proportional to the power needed to reach the worst mobile
among the group, where the worst refers to the mobile receiving the weaker signal which relies on its distance and on additional shadowing effects. We thus consider the situation where there is one common information that every mobile $m \in M$ is interested to receive, and which can be obtained from any one of $n$ BSs.  The objective is then to achieve a mobiles assignment which \emph{minimizes the total power consumption}. 

As mentioned above, this problem is relevant not only for streaming data transmission but also for signalling. A corollary question is indeed how many base stations should be kept active to ensure the signalling to a group of mobiles even when they are not transmitting. In low activity periods, such as during the night, it could be relevant to keep active only few BS ensuring the coverage of few active mobiles. 
Indeed, energy consumption and electromagnetic pollution are main societal and economical challenges that developed countries have to handle. 
The evolution of cellular networks toward smaller cells offering theoretically higher capacity could in turn lead to an unacceptable increase of the energy expenditure of wireless systems. When decreasing the cell size, the energy consumed for data transmission becomes lower compared to the \emph{operational power costs} (e.g. power amplifiers, cooler, etc.) of a typical BS. Switching off a BS may then bring significant improvements in energy efficiency. Therefore, we take into account the switching on/off operation in the problem formulation. The overarching problem studied in the sequel is then finding \textit{energy-efficient broadcast transmission techniques} to reduce spurious energy using distributed schemes. 

The mobile assignment problem (MAP) in the context of broadcast transmission that we study in this work is actually a special case of the \textit{simple plant location problem} (SPLP) \cite{SPLPsurvey}. SPLP lies within \textit{clustering problems}. In the MAP, basically, the objective is to assign the points to at most $k$ clusters so that the sum of all distances between points in the same cluster ($k$-clustering) is minimized. In \cite{bilo}, the typical cost for a BS-mobile pair is assumed to be only a function of distance between the BS and the mobile, formulated as $\sum_{N_j\in \mathcal{C}} \max_{i\in N_j} d_{ij}^{\alpha} + P_0^j$. Here, $N_j$ is a \emph{cluster} of mobiles assigned to BS $j$, $\mathcal{C}$ is the set of clusters, $d_{ij}$ is the distance between mobile $i$ and BS $j$, $\alpha$ is the path loss exponent and $P_0^{j}$ is the operational power cost loaded to BS $j$. This formulation is modified in order to consider the effect of shadowing leading to the following total cost $\sum_{N_j\in \mathcal{C}} \max_{i\in N_j} d_{ij}^{\alpha}/\Psi_{ij} + P_0^j$ where $\Psi_{ij}$ denotes the shadowing effect between mobile $i$ and BS $j$. Thus, this modification turns the MAP into the SPLP. 

While finding the global minimum of the MAP may be identified as an NP-hard problem from SPLP literature, the large scale nature of the cellular network further requires to solve it in a decentralized manner. Thus, game theory appears as a natural tool to cope with both features: distributed decision and NP-hardness. We address this problem by considering the mobiles as players being able to make strategic decisions and the BSs as the strategy identifiers: each mobile has to choose the best BS to be served. 

\subsection{Related Work}
Computational geometric approaches to the MAP can be found in \cite{bilo,
Lev-TovandPeleg, mincostcoverage, coveringpoints}. In \cite{Lev-TovandPeleg},
the authors examined the 1-dimensional version of the MAP, where the effects of
shadowing and operational power cost are not taken into account. Polynomial
time solutions via dynamic programming are proposed. In \cite{mincostcoverage}, authors suggested approximation algorithms (and an algebraic intractability result) for selecting an optimal line on which to place BSs to cover mobiles, and a proof of NP-hardness for any path loss exponent $\alpha > 1$.

The papers \cite{RodopluMeng,wieselthier,egecioglu,CagaljHubaux} focused on source-initiated broadcasting of data in static all-wireless networks. Data are distributed from a source node to each node in a network. The main objective is to construct a minimum-energy broadcast tree rooted at the source node. Multi-hop routing is not the scope of our paper. 

In \cite{hasanaltmangorce}, the combined problem of (i) deciding what
subset of the mobiles would be assigned to each BS, and then (ii) sharing the
BSs' cost of multicast among the mobiles is studied. The subset that is wished to assign to a given BS is said to be its target set of mobiles. This problem can be conceived as a coalitional pricing game played by mobiles which is called \emph{the association game of mobiles}.

\subsection{Our Contribution}
We propose algorithmic solutions for mobile assignment in the context of broadcasting, in order to minimize the overall energy consumption related to transmission and operational powers.
In this context, switching off some fraction of BSs is considered to be a way of decreasing dramatically the
total energy consumption. Note however that heterogeneous networks include macro and small-cells with or without coordination. It is reasonable to assume that small-cells are subject to switching off operation while macro-cells are always turned on. They can indeed serve moving mobiles in order to decrease the number of hand-offs. Further, since the small-cells are deployed intensively, their transmission power is lower than those of macro-cells, while their circuit power dominates. 

Comparing the transmission power about few milliWatts with the
operational power costs which may approach tens of Watts, turning off a fraction of BSs is appealing for reducing the total energy footprint of the network. The efforts for turning off some BSs will be concentrated on small-cells serving fixed mobiles\cite{thefutureofSCN}.

The referred literature mostly concentrates on the geometric aspects of the MAP where basically, the coverage area of a BS is assumed to be a disc which issues from omnidirectional antenna pattern. However, \emph{the effect of
shadowing}, special designed \emph{antenna patterns} as well as \emph{the
operational power costs} may impact the BS-mobile assignments. In this
paper, we take into account these effects by introducing a \emph{power cost matrix} containing all BS-mobile pairing power costs. 

Furthermore, several papers working on coverage optimization deal with static optimization and planning from a centralized point of view. But the dynamic switching-off process associated to the large-scale nature of the network induces to find distributed solutions. To this end, we deal with this problem through a group formation game formulation.
Subsequently, we introduce a new algorithm based on group formation games, called \textit{hedonic decision algorithm}. This formalism is constructive: a new class of group formation games is introduced where the utility of players within a group is separable and symmetric. This is a generalization of parity affiliation games and this hedonic decision algorithm is in fact applicable for any set covering problem. 

To prove the efficiency of this approach, we then derive four other methods allowing to solve the initial problem. 
First, we propose a recursive algorithm called \emph{the hold minimum
algorithm} which solves the considered problem optimally. However, the hold minimum
algorithm operates in a centralized way since it requires the whole knowledge
for each BS-mobile pairing power cost. We then adapt an approach from the SPLP literature to the MAP: a centralized polynomial-time heuristic algorithm is proposed called the \emph{the column control} which produces optimal assignments when taking into account the operational power cost. This algorithm is also extended to a distributed approach, where each mobile gathers the local information from the BSs located in its range. On the other hand, \emph{the nearest base station algorithm}, a distributed greedy algorithm which runs in polynomial-time is also evaluated. This algorithm is not efficient if the operational power cost is large, but is very efficient for the fast-moving users served by macro BSs. 

The rest of the paper is organized as follows. In section \ref{sec:genericproblem}, the MAP is formulated mathematically as a clustering problem and different formulations are then proposed. In section \ref{sec:decentralized}, the game framework and the hedonic decision algorithm are proposed. In section \ref{sec:efficientalgorithms}, we derive other algorithmic solutions for the MAP and their complexity is anlayzed in section \ref{sec:timecomplexity}. Finally, we present simulation results in section \ref{sec:simulationresults} and we expose some conclusions in section \ref{sec:conclusions}.

\section{The Generic MAP Problem}
\label{sec:genericproblem}
\begin{figure*}
\includegraphics[width=\linewidth]{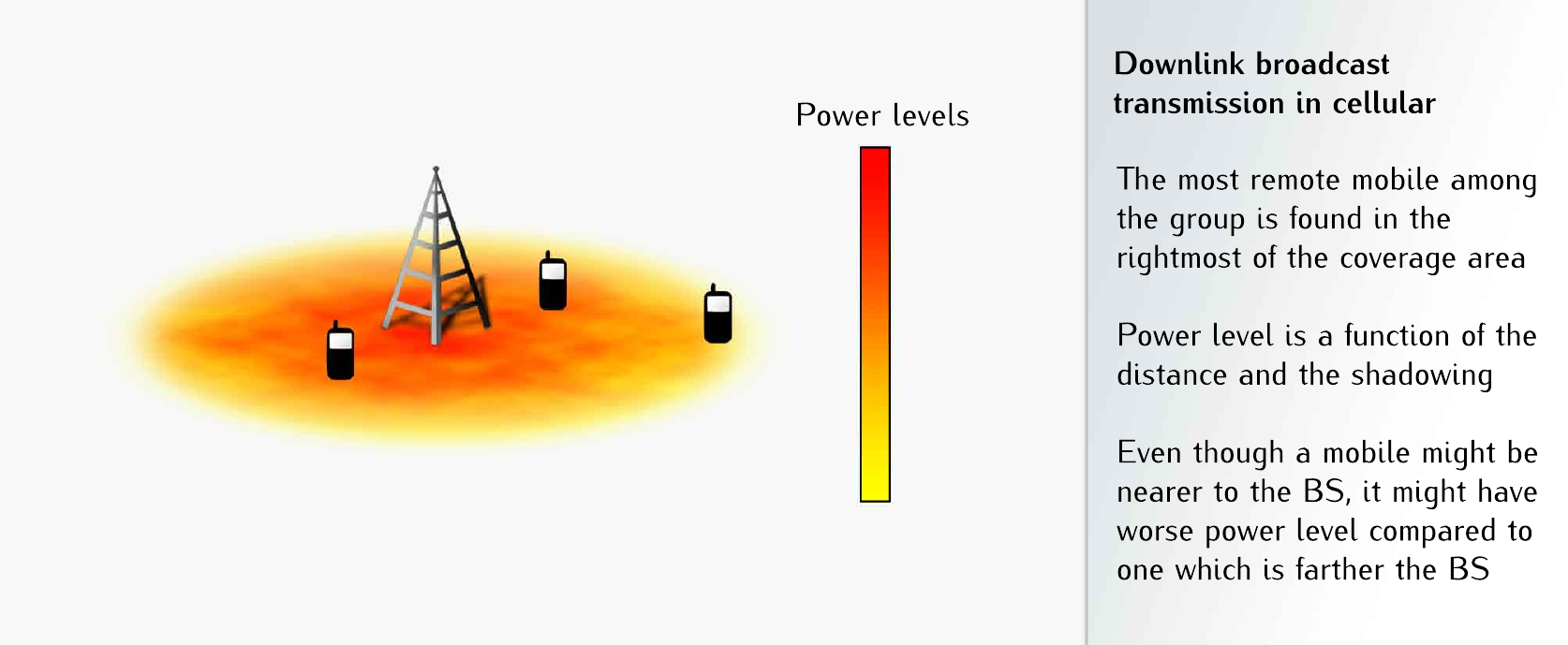}
\caption{Broadcast transmission in cellular networks.}
\label{fig:broadcasttransmission}
\end{figure*}

We consider the coverage problem in the case of broadcast transmission in cellular networks. We assume that each BS transmits simultaneously to the mobiles. The distance between the mobile $i$ and BS $j$ is represented by $d_{ij}$. The power needed to receive the transmission is given by $P_r$. We consider basic signal propagation model capturing path loss as well as shadowing effect formulated as
\begin{equation}\label{propModel}
    P_{ij} = P_r \frac{d_{ij}^{\alpha}}{\Psi_{ij}},
\end{equation}
where $P_{ij}$ and $\alpha$ denote transmitted power from BS $j$ to mobile $i$ and path loss exponent, respectively. The random variable $\Psi$ is used to model slow fading effects and commonly follows a log-normal distribution.

The required transmission power depends on the mobile having the worst
signal level from the BS (Figure \ref{fig:broadcasttransmission}). At this
power level, all mobiles are guaranteed to receive a sufficient power.
We also consider the operational power cost denoted as $P_0^j$
which captures the energy expenditure of a typical BS $j$ for operational
costs (power amplifiers, cooler, etc.). So, the total power cost (transmission
power + operational power cost) of a typical transmission between BS $j$ and
mobile $i$ is denoted as
\begin{equation}
    p_{ij} = P_{ij} + P_0^j.
\end{equation}
Let $M = (1,\ldots,m)$ and $N = (1,\ldots,n)$ be the sets of mobiles and BSs, 
respectively. Representing the \emph{power cost matrix} $\mathbf{P} = 
(p_{ij})\in \Re^{m \times n}$, we assume $p_{ij}\in [0,\infty)$ where 
if $P_{ij} > P_{max}$, then $p_{ij} = \infty$ ($P_{max}$ denotes a maximal 
power, for instance, in WiFi, it is $100$ mW). 

\subsection{The MAP as a Clustering Problem}
Clustering is a rich branch of combinatorial problems which have been extensively studied in many fields including database systems, image processing, data mining, molecular biology, etc. \cite{bilo}. Consider the set of mobiles $M$; a \textit{cluster} is any non-empty subset of $M$ and a \textit{clustering} is a partition of $M$. Many different clustering problems can be defined. The mostly studied problems are defined through their objective which is to assign the points to at most $k$ clusters so that either:
\begin{itemize}
\item \textit{k-centre}: the maximum distance from any point to its cluster centre is minimized,
\item \textit{k-median}: the sum of distances from each point to its closest cluster centre is minimized,
\item \textit{k-clustering}: the sum of all distances between points in the same cluster is minimized.
\end{itemize}

The problem of clustering a set of points into a specific number of clusters so as to minimize the sum of cluster sizes is referred to as min-size $k$-clustering problem. In \cite{bilo}, the typical cost for a BS-mobile pair is assumed to be only a function of the BS-mobile distance and leads to the formulation: $\sum_{N_j\in \mathcal{C}} \max_{i\in N_j} d_{ij}^{\alpha} + P_0^j$, where $N_j$ is a \emph{cluster} of mobiles assigned to BS $j$, $\mathcal{C}$ is the set of clusters, $d_{ij}$ is the distance between mobile $i$ and BS $j$, $\alpha$ is the path loss exponent, $P_0^{j}$ is the operational power cost loaded to BS $j$. Since in this paper, we add the effect of shadowing to such a cost, the cost function becomes $\sum_{N_j\in \mathcal{C}} \max_{i\in N_j} d_{ij}^{\alpha}/\Psi_{ij} + P_0^j$. Note that the shadowing effect breaks the monotonicity with $d_{ij}$ (see Figure \ref{fig:broadcasttransmission}) and then shifts the problem to a \textit{simple plant location problem} (SPLP) \cite{SPLPsurvey} formulated as follows:
\begin{itemize}
\item Let have $ n $ potential facility locations. A facility can be opened in any location $ j $; opening a facility location has a non-negative cost corresponding to $ P_0^j $ in the MAP. Each open facility can provide an unlimited amount of commodity corresponding to unlimited number of mobiles served by a BS in the MAP;
\item there are $ m $ customers that require a service. The goal is to determine a subset of the set of potential facility locations, at which to open facilities and an assignment of all clients to these facilities so as to minimize the overall total cost.
\end{itemize}

However, the SPLP formulation presents a very high complexity, and we rather propose to turn out the problem as a set covering problem, starting from a binary integer formulation.

\subsection{Binary Integer Formulation}
Note that there are at most $ 2^m - 1 $ possible subsets of $ M $.  Each subset can be associated to any BS and the number of combinations is given by $\eta = n (2^m-1)$. We note $\mathcal{S}$ the collection of total possibilities. The index set of $\mathcal{S}$ is denoted by $L = (1, \ldots, \eta)$. Let $\mathbf{W} = (w_{kl}) \in \Re^{(m+n)\times \eta}$ be a 0-1 matrix with $w_{kl} = 1$ if the node $k$ (i.e. mobile or BS) belongs to the set $(S_l;j)$.  Let $q = (q_l) \in \Re^{\eta}$ be an $\eta$-dimensional vector. The value $q_l$ represents the optimal power of $(S_l; j) \in \mathcal{S}$ by which we denote a pair which consists of a set of mobiles $S_l$ assigned to BS $j$. Clearly
\begin{equation}
    q_l = \max_{i\in S_l} p_{ij}.
\end{equation}
The formulation of this problem is given by
\begin{equation}\label{problem}
\begin{split}
    (P) \qquad & p = \min \sum_{l\in L} q_l x_l \\
              & \text{s.t.} \quad \sum_{l \in L} w_{kl} x_{l} = 1, \quad k\in M\cup N, \\
              & x_l \in \{0,1\}, \quad l\in L
\end{split}
\end{equation}
where the term $\sum_{l \in L} w_{kl} x_{l} = 1$ imposes that only one BS is associated to a mobile. By this way, we do not let a mobile to be assigned to several BSs. It follows that the optimal clustering is denoted as $\mathcal{C}^*\subset \mathcal{S}$ such that $\mathcal{C}^* = \{ (S^{*};j^*) \in \mathcal{S}, \forall j\in N \}$ where $(S^{*};j^*)$ is the optimal pairing.

Thanks to this formulation, we can show that the MAP may be derived as a set partitioning problem. In the MAP, the set $M$ is associated with another set $N$. Therefore, the collection $\mathcal{S}$ contains those subsets of $M$ each of which is associated with every element of the set $N$. Consider the following example.

\begin{example}
Let us have a power cost matrix given by
\begin{equation}
    \mathbf{P} = \begin{bmatrix}
                   3 & 6 \\
                   5 & 1 \\
                 \end{bmatrix}.
\end{equation}
The collection of total possibilities:
\begin{equation}
\mathcal{S} = \{(1;1), (2;1), (1,2;1), (1;2), (2;2), (1,2;2)\}.
\end{equation}
Recall that $(S_l;j)$ denotes the cluster of mobiles $S_l$ assigned to BS $j$. The optimal values for each possibility is given by $q = (q_l) =
(3,5,5,6,1,6)$. Then, we define the following matrix:
\begin{equation}
\mathbf{W} = \begin{bmatrix}
                   1 & 0 & 1 & 1 & 0 & 1 \\
                   0 & 1 & 1 & 0 & 1 & 1 \\
                   1 & 1 & 1 & 0 & 0 & 0 \\
                   0 & 0 & 0 & 1 & 1 & 1
                 \end{bmatrix}.
\end{equation}
The optimal total power is thus calculated by the following binary integer
program:
\begin{equation}
\begin{split}
    p = & \min (3x_1+5x_2+5x_3+6x_4+x_5+6x_6) \\
    \text{s.t. } & x_1+x_3+x_4+x_6 = 1, \\ 
    & x_2+x_3+x_5+x_6=1, \\
    & x_1+x_2+x_3 = 1, \\
    & x_4+x_5+x_6=1, \\
              & x_l \in \{0,1\}, \quad l\in (1,\ldots,6).
\end{split}
\end{equation}
The values $x_1=1$ and $x_5=1$ result in the optimal total power of the
example scenario, i.e., $p=3+1=4$ with the optimal clustering $\mathcal{C}^* = \{(1;1),(2;2)\}$.
\end{example}

However, set partitioning problems are well known to be NP-hard \cite{BalasPadbergSetPartitioning}. Consequently, the MAP being a special set partitioning problem is also NP-hard.
In the next section, we propose to reduce this complexity.

\subsection{The MAP as a Set Covering Problem}
The previous formulation stated that a unique BS is allowed to serve a mobile. However in terms of pure coverage considerations, the optimal solution may feature some mobiles to be covered by several BS, no matter to which BS the mobile eventually associate with. We then relax the condition of associating only one BS to a cluster of mobiles in $ (P) $ such that it is now possible to have a cluster of mobiles covered by more than one BS: $ \sum_{l\in L} w_{kl}x_l \geq 1$. Thus, this arrangement turns the MAP into so called \textit{set covering} problem. 

Consider a set of mobiles $S$ assigned to BS $j$ noted as $(S;j)$. When a group of mobiles $T\subset S$ deviates to another BS $k$ then the cost due to $S$ becomes additive. Let the cost of $(S;j)$ and $(T;k)$ be $q_S = \max_{i\in (S;j)} p_{ij}$ and $q_T = \max_{i\in (T;k)} p_{ik}$, respectively. We denote the total cost before deviation of $T$ as $p$ and after deviation of $T$ as $p'$, respectively \footnote{Note that these costs are not optimal. We only would like to show what is the effect of deviation of a group of mobiles from theirs current BS.}, which can be given by
\begin{equation}
	p = p_r + q_S,
\end{equation}
\begin{equation}
	p' = p_r' + q_{S\backslash T} + q_T,
\end{equation}
where $p_r$ and $p_r'$ are the remaining costs before and after deviation, respectively. There is always a potential (probability) increasing the total cost when a deviation occurs, i.e. $p' \geq p$. For better observation, let us consider the following power cost matrix:
\begin{equation}
\mathbf{P} = \begin{bmatrix}
                   10 & 15 & 25 \\
                   27 & 20 & 33 \\
                   32 & 31 & 30
                 \end{bmatrix}.
\end{equation}
Let $(S;j) = (1,2,3;1)$ and $(T;k) = (1,2;3)$, respectively. Then, $q_S = \max (10,27,32) = 32$, $q_T = \max (25,33) = 33$, and $q_{S\backslash T} = \max (32) = 32$ resulting in the following total costs:
\begin{align}
	p & = 32 \\
	p' & = 32 + 33 = 65,
\end{align}
where $p_r = p_r' = 0$.

Utilizing this property, 
\begin{quote}
\textit{we delete from the collection $\mathcal{S}$ all those assignments $(S\backslash T;j)$ whenever the cost of $(S;j)$ is equal to the cost of $(S\backslash T;j)$ such that $T\subset S$.}
\end{quote}
For example, let us consider the last example where $M=(1,2,3)$ and $N=(1,2,3)$. For $j = 1$, all possible assignments are 
$(1;1)$, $ (2;1) $, $ (3;1) $, $ (1,2;1) $, $ (1,3;1) $, $ (2,3;1) $, $(1,2,3;1)$  
corresponding to the cost vector $q = (10, 27, 32, 27, 32, 32, 32)$. Note that the cost of $(3;1)$, $(1,3;1)$, $(2,3;1)$, $(1,2,3;1)$ are equal each other. Therefore, we remove $(3;1)$, $(1,3;1)$, $(2,3;1)$ from the collection. The reduced collection of assignments becomes as following:
\begin{align}
\mathcal{S}' = \{(1;1),&(1,2;1),(1,2,3;1),(1;2),(1,2;2),\notag\\
&(1,2,3;2),(1;3),(1,3;3),(1,2,3;3)\}.
\end{align}
Note that this property is an extension of the geometric ``coverage range''. If the shadowing was not considered, this approach would reduced to defining the maximal coverage range. But with shadowing the mobile requiring the most power is not always the further one. Thus, the binary integer program of finding the solution of the problem is given by
\begin{align}
    p = & \min (10 x_1+27 x_2+32 x_3+15 x_4+20 x_5+31 x_6 + \notag \\
    & 25 x_7 + 30 x_8 + 33 x_9) \notag \\
    \text{s.t.} \quad & x_1+x_2+x_3+x_4+x_5+x_6+x_7+x_8+x_9 \geq 1, \notag \\ 
    & x_2+x_3+x_5+x_6+x_9 \geq 1, \notag \\
    & x_3+x_6+x_8+x_9 \geq 1, \notag \\
              & x_l \in \{0,1\}, \quad l\in (1,\ldots,9).
\end{align}
The solution of this problem is found to be $x_6 = 1$ and $x_l = 0, \forall l \in (1,\ldots,9)$ which fits to the optimal one. 

By such an elimination, the size of the collection of assignments reduces from $n(2^m - 1)$ to $nm$. This prove that the set-covering formulation is much more efficient than the set-partitioning one.

\subsection{Brute-force Search Solution}
Enumerating all possible solutions and choosing the one which produces the lowest cost is known as brute-force search or generate and test. 

We represent by $\mathbf{A} = (a_{ij}) \in \Re^{m \times n}$ the \emph{assignment matrix} where $a_{ij} \in (0,1)$. If mobile $i$ is assigned to BS $j$, then $a_{ij} = 1$, otherwise $a_{ij} = 0$. Notice that each row of the assignment matrix includes only unique ``1'' which means that a mobile is served by only one BS, i.e. $\sum_{j}a_{ij} = 1$. This is not in contradiction with our former remark about the possibility of having a mobile covered by several BS. Here now, we decide to associate a mobile to a BS. So if a mobile is covered by several BSs, the serving BS can be anyone of this covering set. Denoting the collection of the assignment matrices $\mathcal{A}$, actually, we formalize the problem as following:
\begin{equation}
    p = \min_{\mathbf{A}\in\mathcal{A}} \left( \sum_{i\in M} \max_{j\in N} \mathbf{A}\otimes \mathbf{P}\right),
\end{equation}
where $\otimes$ is the element-wise product. Note that the total number of possibilities of assignment matrices can be calculated as $|\mathcal{A}| = n^m$.

\section{Decentralized Solution: The Hedonic Decision (HD) Algorithm}
\label{sec:decentralized}
We now turn to the study of decentralized methods for solving the MAP. Our approach is based on \textit{group formation game} (see \cite{bib:KonishiGroupFormationGame}, \cite{bib:HollardGroupFormationGame}, \cite{bib:MilchtaichGroupFormationGame}).

\subsection{The Game Model}
A group formation game is represented by a triple ${\cal G} = \langle M, N^m, (u_x)_{x\in M}\rangle$ where $M = \{1,2,\ldots,m\}$ is the set of \emph{players} (i.e. the mobiles), $N$ is the set of \emph{strategies} (i.e. the BSs) shared by all the players and $u_x : N^m\rightarrow \Re$ is the \emph{utility function} of player $x \in M$. Each player $x\in M$ chooses exactly one element from the $n$ alternatives in $N$. The choices of players are represented by $\sigma = \{s_1,s_2\ldots, s_m\} \subseteq N^m$ which is called the \emph{strategy-tuple} ($s_x$ shows the strategy chosen by player $x$). A \textit{partition} of players according to strategy-tuple $ \sigma $ is denoted as $ \Pi(\sigma) = \{(G_j^{\sigma})_{j\in N}\}$ where $ G_j^{\sigma} $ is the group of players choosing the strategy $ j $.

We assume the two following conditions and we will see later how we can define the utilities to achieve these conditions:
\begin{enumerate}
\item \textbf{Separability}:
The utility of any player in any group is said to be \textit{separable} if its utility can always be split as a sum according to:
\begin{equation}
u_x(s_x,\sigma_{-x}) = \sum_{y\in G_{s_x}^{\sigma}} v_x(y;s_x),
\end{equation}
where $ v_x : \{M;N\} \rightarrow \Re $ may be interpreted as the gain of player $ x $ from player $ y $ if $ x $ chooses strategy $ s_x $. Note that $ v_x(x;s_x) $ is the utility of player $ x $ when it is the only player choosing strategy $ s_x $. Thus, the separability property states that utility transfers among a group of users sharing the same strategy is done such that the utility granted to one user is a sum of utilities granted individually by each partner in the group.

\item \textbf{Symmetricity}:
The utility is said to be \textit{symmetric} if the individual gain of $ x $ from $ y $ is equal to the gain of $ y $ from $ x $ when they both share the same strategy $ s $:
\begin{equation}
v_x(y;s) = v_y(x;s), \quad \forall x,y \in M.
\end{equation}
Therefore, this symmetric utility can be referred to as: $ v_x(y;s) = v_y(x;s) = v(x,y;s)$, $ \forall x,y\in M $. $v(x,y;s)$ is called the \textit{symmetric bipartite utility} of player $ x $ and $ y $ while the common strategy is $ s $.
\end{enumerate}
Actually, the game defined above is a straightforward generalization of party affiliation games \cite{bib:partyAffiliation}.

\begin{theorem}\label{thm:GisAPotentialGame}
$ {\cal G} $ is a potential game.
\end{theorem}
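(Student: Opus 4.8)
The plan is to exhibit an exact potential function for $\mathcal{G}$, that is, a map $\Phi : N^m \to \Re$ for which every unilateral deviation changes $\Phi$ by exactly the change in the deviating player's utility:
\[
u_x(b,\sigma_{-x}) - u_x(a,\sigma_{-x}) = \Phi(b,\sigma_{-x}) - \Phi(a,\sigma_{-x})
\]
for all $x\in M$, all $\sigma_{-x}$, and all $a,b\in N$. The natural candidate, dictated by the separable and symmetric form of the utilities, is the total intra-group bipartite utility
\[
\Phi(\sigma) = \sum_{j\in N}\ \sum_{\{x,y\}\subseteq G_j^{\sigma}} v(x,y;j),
\]
where the inner sum runs over all unordered pairs of players assigned to BS $j$ under $\sigma$, the degenerate pair $\{x\}$ contributing the single-player term $v(x,x;j)$. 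Symmetricity is exactly what makes this well defined: because $v(x,y;j)=v(y,x;j)$, each unordered pair carries one unambiguous value.

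First I would fix the deviating player $x$ together with $\sigma_{-x}$, and write $G_j$ for the set of players other than $x$ choosing $j$ under $\sigma_{-x}$. The key structural observation is that switching $x$ from $a$ to $b$ alters the group membership of no player except $x$; consequently the only terms of $\Phi$ affected are those of the pairs containing $x$. Every pair $\{y,z\}$ with $x\notin\{y,z\}$ contributes identically to $\Phi(a,\sigma_{-x})$ and $\Phi(b,\sigma_{-x})$ and cancels in the difference. This reduces the potential difference to the $x$-pairs alone,
\[
\Phi(b,\sigma_{-x})-\Phi(a,\sigma_{-x}) = \sum_{y\in G_b\cup\{x\}} v(x,y;b)\ -\ \sum_{y\in G_a\cup\{x\}} v(x,y;a),
\]
where the $y=x$ summand accounts for the self-term $v(x,x;\cdot)$ on each side.

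Next I would invoke separability to identify these two sums with $x$'s utilities. By definition $u_x(a,\sigma_{-x}) = \sum_{y\in G_a\cup\{x\}} v_x(y;a)$ and $u_x(b,\sigma_{-x}) = \sum_{y\in G_b\cup\{x\}} v_x(y;b)$, since under $(a,\sigma_{-x})$ and $(b,\sigma_{-x})$ the group sharing $x$'s strategy is $G_a\cup\{x\}$ and $G_b\cup\{x\}$ respectively. Rewriting $v_x(y;\cdot)=v(x,y;\cdot)$ by symmetry, the right-hand side of the last display is exactly $u_x(b,\sigma_{-x})-u_x(a,\sigma_{-x})$, which is the required exact-potential identity and proves the theorem.

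The main difficulty is conceptual rather than computational: it lies in the bookkeeping of which terms survive the cancellation and, above all, in recognising that symmetricity is indispensable. Without $v(x,y;j)=v(y,x;j)$ the unordered-pair potential would be ill-posed, and any symmetrised surrogate would contribute $\tfrac{1}{2}\bigl(v_x(y;\cdot)+v_y(x;\cdot)\bigr)$ to the difference, matching $x$'s utility change only when the two individual gains already agree. I would therefore flag precisely where symmetry enters and treat the self-term $v(x,x;\cdot)$ uniformly on both sides to avoid an off-by-one slip.
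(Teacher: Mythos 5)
Your proof is correct and takes essentially the same route as the paper: your unordered-pair potential $\sum_{j\in N}\sum_{\{x,y\}\subseteq G_j^{\sigma}} v(x,y;j)$ is exactly the paper's $\sum_{j\in N}\bigl[\sum_{a\in G_j^{\sigma}} v(a,a;j)+\tfrac{1}{2}\sum_{a\in G_j^{\sigma}}\sum_{b\in G_j^{\sigma}:b\neq a} v(a,b;j)\bigr]$ rewritten without the double-counting factor, and both arguments proceed by cancelling all pair terms not containing the deviator and identifying the surviving $x$-pairs with $u_x$ via separability and symmetry. If anything, your explicit, uniform treatment of the self-term $v(x,x;\cdot)$ tidies up a small notational looseness in the paper, where the sums $\sum_{y\in G_{s_x}^{\sigma}} v(x,y;s_x)$ are written with the self-term extracted separately but without explicitly excluding $y=x$ from the index set.
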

\begin{proof}
A non-cooperative game is a potential game \cite{bib:PotentialGames} whenever there exists a function $\Phi$
\begin{equation}
\Phi(s_x,\sigma_{-x}) - \Phi(s'_x,\sigma_{-x}) = u_x(s_x,\sigma_{-x}) - u_x(s'_x,\sigma_{-x}),
\end{equation}
meaning that when player $ x $ switches from strategy $s_x$ to $s'_x$ the difference of its utility can be given by the difference of a function $ \Phi $. This function is called a \textit{potential function}. 
\begin{quote}
\textit{The strategy-tuple that maximizes the potential function is a Nash equilibrium in the game.}
\end{quote}

Let us choose as following the potential function $ \Phi $:
\begin{align}
\Phi(\sigma) & = \sum_{j\in N} \left[ \sum_{a\in G_j^{\sigma}} v(a,a;j) + \frac{1}{2}\sum_{a\in G_j^{\sigma}} \sum_{b\in G_j^{\sigma}:b\neq a} v(a,b;j) \right] \notag \\
& = \sum_{j\in N}  \sum_{a\in G_j^{\sigma}} v(a,a;j) + \frac{1}{2} \sum_{j\in N} \sum_{a\in G_j^{\sigma}} \sum_{b\in G_j^{\sigma}:b\neq a} v(a,b;j).
\label{eq:potentialfunction}
\end{align}
Actually, here we take the sum of single player utilities and the half of total symmetric bipartite utilities. Let us rewrite the potential function as following:
\begin{align}
&\Phi(s_x,\sigma_{-x}) =  v(x,x;s_x) + \sum_{y\in G_{s_x}^{\sigma}}v(x,y;s_x) \notag\\ & + \underbrace{\sum_{j\in N}  \sum_{a\in G_j^{\sigma}\setminus x} v(a,a;j)  + \frac{1}{2}\sum_{j\in N} \sum_{a\in G_j^{\sigma}\setminus x} \sum_{b\in G_j^{\sigma}\setminus x:b\neq a} v(a,b;j)}_{I}.
\end{align}
When player $ x $ switches from $ \sigma_x $ to $ \sigma'_x $ (the other players do not change their strategies), then the strategy-tuple is transformed from $ \sigma $ to $ \sigma' $, and the potential becomes
\begin{align}
&\Phi(s'_x,\sigma_{-x}) =  v(x,x;s'_x) + \sum_{y\in G_{s'_x}^{\sigma}}v(x,y;s'_x) \notag\\ & + \underbrace{\sum_{j\in N}  \sum_{a\in G_j^{\sigma'}\setminus x} v(a,a;j)  + \frac{1}{2}\sum_{j\in N} \sum_{a\in G_j^{\sigma'}\setminus x} \sum_{b\in G_j^{\sigma'}\setminus x:b\neq a} v(a,b;j)}_{I'}.
\end{align}
Note that $ I=I' $  since the total utility due to the other players is equal both in $ \sigma $ and $ \sigma' $. Thus, the difference of potentials is given by
\begin{align}
&\Phi(s_x,\sigma_{-x}) - \Phi(s'_x,\sigma_{-x}) =  v(x,x;s_x) - v(x,x;s'_x) \notag \\
& + \sum_{y\in G_{s_x}^{\sigma}}v(x,y;s_x) -  \sum_{y\in G_{s'_x}^{\sigma}}v(x,y;s'_x).
\end{align}
On the other hand, the difference of the utility of player $ x $ is calculated as
\begin{align}
& u_x(s_x,\sigma_{-x}) - u_x(s'_x,\sigma_{-x}) =  v(x,x;s_x) - v(x,x;s'_x) \notag \\
& + \sum_{y\in G_{s_x}^{\sigma}}v(x,y;s_x) -  \sum_{y\in G_{s'_x}^{\sigma}}v(x,y;s'_x).
\end{align}
By this result we conclude that 
\begin{equation}
\Phi(s_x,\sigma_{-x}) - \Phi(s'_x,\sigma_{-x}) = u_x(s_x,\sigma_{-x}) - u_x(s'_x,\sigma_{-x}) 
\end{equation}
which proves that $ {\cal G} $ is a potential game. Thus, $ {\cal G} $ admits a Nash equilibrium in pure strategies $ \sigma^\ast $ which results in the partition $ \Pi(\sigma^\ast) $ and which maximizes the potential function $\Phi$.
\end{proof}

\begin{corollary}
\emph{The proof \ref{thm:GisAPotentialGame} is constructive:} any group formation game possessing separable and symmetric utility gain of players within a group always converges to a pure Nash equilibrium. 
\end{corollary}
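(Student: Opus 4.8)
The plan is to read the corollary as a direct consequence of the potential game structure already established in Theorem~\ref{thm:GisAPotentialGame}, combined with the classical fact that finite potential games enjoy the finite improvement property. Since the claim concerns \emph{convergence} to a pure Nash equilibrium rather than mere existence, I would argue through the dynamics of unilateral improving deviations and show that they terminate.

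First I would record the two structural facts supplied by the theorem: the game admits the explicit potential $\Phi$ of~\eqref{eq:potentialfunction}, and for every player $x$ and every pair of strategies $s_x,s'_x$ one has $\Phi(s_x,\sigma_{-x})-\Phi(s'_x,\sigma_{-x}) = u_x(s_x,\sigma_{-x})-u_x(s'_x,\sigma_{-x})$. Next I would observe that the joint strategy space $N^m$ is finite, with $|N^m| = n^m$, so $\Phi$ attains only finitely many values over it. The core step is then to consider any improvement path, i.e. a sequence of strategy-tuples in which at each step a single player switches to a strategy that strictly raises its own utility; by the potential identity each such switch strictly increases $\Phi$. A strictly increasing real sequence drawn from a finite set cannot revisit a state and must therefore be finite, so every improvement path halts after finitely many steps.

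Finally I would identify the terminal tuple $\sigma^\ast$ of such a path: by construction no player can unilaterally change strategy to raise its utility, which is exactly the definition of a pure Nash equilibrium, and it coincides with a (local) maximizer of $\Phi$ as noted in the theorem. This establishes convergence for \emph{any} game in the introduced class, since the only properties invoked are separability and symmetricity---these alone guarantee that $\Phi$ is a valid potential, through the cancellation $I=I'$ used in the theorem. The construction is explicit: the improvement path is itself an executable myopic better-response procedure, which is precisely the hedonic decision algorithm, so the proof is constructive in the asserted sense.

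I expect the only delicate point to be the insistence on \emph{strict} improvement at each step, which is what forbids cycling among tuples of equal potential; I would handle this by defining the dynamics to move only on strict utility gains, so that ties never trigger a deviation and the finite value set of $\Phi$ forces termination. No genuinely hard obstacle remains, because the heavy lifting---exhibiting $\Phi$ and verifying the potential identity---was already carried out in Theorem~\ref{thm:GisAPotentialGame}.
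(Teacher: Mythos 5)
Your proposal is correct and follows essentially the same route as the paper: the paper offers no separate proof of this corollary, treating it as immediate from Theorem~\ref{thm:GisAPotentialGame} together with the standard finite-improvement property of finite potential games (the theory cited from \cite{bib:PotentialGames}), which is precisely the argument you spell out. Your explicit points---that $\Phi$ takes finitely many values on the finite joint strategy space $N^m$, that each strict unilateral improvement strictly increases $\Phi$ and hence forbids cycling, and that the terminal tuple of any improvement path is a pure Nash equilibrium and a local maximizer of $\Phi$---are exactly the details the paper leaves implicit (and later relies on when describing the best-reply dynamics of the hedonic decision algorithm).
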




\subsection{Separable and Symmetric Gain Allocation}
Recall that the required power for serving the group of mobiles $G_j^{\sigma}$ by BS $j$ is denoted as $\max_{x\in G_j^{\sigma}} p_{xj}$. We represent by $ U(G_j^{\sigma})  = -\max_{x\in G_j^{\sigma}} p_{xj}\leq 0$, as a \textit{utility} arising due to group $ G_j^{\sigma} $. Note that $ U(G_j^{\sigma}) \leq 0,\forall G_j^{\sigma}\subseteq M, \forall j\in N$ is a monotonically decreasing function \cite{infocom}.

The \textit{clustering profit} due to mobile $ x $ and $ y $ in BS $ j $ is given by 
\begin{equation}
\label{eq:clusteringprofit}
\Delta(x,y;j) = U(x,y;j) - [U(x;j) + U(y;j)]
\end{equation}
where $U(x;j) = -p_{xj}$ is the utility of player $ x $ when is served alone in BS $ j $. Therefore, $ \Delta(x,y;j) = p_{xj}+p_{yj} - \max(p_{xj},p_{yj}) = \min(p_{xj},p_{yj})$.
\begin{remark}
Note that the clustering profit is a useful metric for evaluating a group of mobiles. Whenever a group of mobiles are near each other, then the clustering profit is high; thus, assigning this group to only one BS is almost always efficient. 
\end{remark} 


To ensure separability and symmetricity, we propose to choose the symmetric bipartite utility according to:
\begin{equation}
v(x,y;j) =
\begin{cases}
  \theta \Delta(x,y;j), &\mbox{ if } x\neq y  \\
  -p_{xj}, & \mbox{ if } x = y,
\end{cases}
\end{equation}
where $ \theta $ is called as \textit{clustering weight} which is a parameter that must be adjusted according to the environment. We will show later how it can impact the convergence point of the system.  Thus, the utility function of any player $x\in M$ is given by $ u_x(j,\sigma_{-x})  =  \sum_{y\in G_{j}^{\sigma}}v(x,y;j) $ then,
\begin{eqnarray}
\label{eq:HedonicDecisionPreferenceFunction}
\lefteqn{u_x(j,\sigma_{-x}) =} \\ 
&& \begin{cases}
\theta \sum_{y\in G_{j}^{\sigma}} \min (p_{xj},p_{yj}) -p_{xj}, &\mbox{ if } G_{j}^{\sigma} \neq x \notag \\
- p_{xj}, &\mbox{ if } G_{j}^{\sigma} = x.
\end{cases}
\end{eqnarray}

\subsection{Interpretation of Clustering Weight}
Let us rewrite the potential function according to defined symmetric bipartite utility. Then, using eq. \eqref{eq:potentialfunction}, we may express the corresponding potential function which is equal to
\begin{equation}
\Phi(\sigma) = \sum_{j\in N} \left[- \sum_{a\in G_j^{\sigma}}p_{aj} + \frac{\theta}{2} \sum_{a\in G_j^{\sigma}} \sum_{b\in G_j^{\sigma}\backslash a} \min(p_{aj},p_{bj})\right].
\end{equation} 
Now, let us consider that we can order the mobiles associated to BS $ j $, according to the required power. The ordered set of mobiles related to BS $ j $ is represented by $ \tilde{G}_j^{\sigma} $. Observe that we can now compute the potential function as following:
\begin{equation}
\Phi(\sigma) = \sum_{j\in N} \left[- \sum_{a\in \tilde{G}_j^{\sigma}}p_{aj} + \theta \sum_{i=1}^{|\tilde{G}_j^{\sigma}|-1} (|\tilde{G}_j^{\sigma}| - i) p_{ij}\right].
\end{equation} 
According to this result and the description of individual costs, we can state the following properties:
\begin{enumerate}
\item If $ \theta $ is very small, i.e. the dominant term in the potential function and in the symmetric bipartite utility is the individual power $ p_{aj} $, then with a very low $ \theta $, each mobile will privilege an association to the nearest BS. This is exactly the case when $ \theta=0 $.
\item When $ \theta  $ increases a mobile may decide to leave its nearest neighbour if the lost in power is compensated by the second term. Suppose that the mobile wants to associate to a BS $ k $, where all other mobiles currently associated with experience a better channel. Then the gain to associate to this second BS for this user will be $ \theta \sum p_{ij} $, i.e. the sum of all powers of mobiles already associated with this BS, weighted by $ \theta $. It is clear that the mobile will be joining either a cell having already strong power terms or a huge number of users.
\item If $ \theta $ becomes very large, we can expect that all mobiles will converge to the same BS which means that only one BS is active.
\end{enumerate}

\subsection{Network Topology as a Result of Best-reply Dynamics}
Consider the setting in which only one player decides its strategy. It is called as \textit{best-reply dynamics} when a player chooses the strategy which maximizes its utility. When there is no any player which can improve its utility, then this network topology, i.e. $ \Pi(\sigma^\star) $, corresponds to a Nash equilibrium. Note that any local maximum in the potential function is a Nash equilibrium. Therefore, the network topology obtained by best-reply dynamics accounts for a local maximum of the potential function. Total power cost related to $ \Pi(\sigma^\star) $ can be given by
\begin{equation}
p = \sum_{G_j\in \Pi(\sigma^\star) } \max_{i\in G_j} p_{ij}
\end{equation}
where $ G_j $ is the group of mobiles associated with BS $ j $ in the case of stable strategy-tuple $ \sigma^{\star} $.

Assuming that each mobile is capable to discover those BSs that can transmit to it, we can produce a scheduler in the following way: each BS generates a random clock-time for all those mobiles that it can transmit; then each mobile selects randomly a clock-time from those BSs that it can discover. We need to produce the clock-times by such a way that the collision of the turns of mobiles is minimal. In case of a collision, the clock-times of the corresponding mobiles are regenerated by corresponding BSs.

In Algorithm \ref{alg:TheHedonicDecision}, the pseudo-code of the HD is given. Note that this is an algorithm performed in both BS and mobile sides by an exchange of the information in a separated channel.

\begin{algorithm}
\caption{The Hedonic Decision}
\label{alg:TheHedonicDecision}
\begin{algorithmic}
\STATE \textbf{Base Station}:
\STATE Check stability
\WHILE{there is no stability}
\STATE Send information to each mobile about the current partition
\STATE Check stability
\ENDWHILE
\STATE \textbf{Mobile}:
\WHILE{there is no stability}
\STATE Determine the preferred BS according to eq.  \eqref{eq:HedonicDecisionPreferenceFunction}
\STATE Send information to the preferred BS
\ENDWHILE
\end{algorithmic}
\end{algorithm}

\begin{corollary}
In the literature, the use of game models for set covering problems is called as \emph{set covering games} \cite{bib:SelfishSetCovering, bib:CostSharingandStrategyproofMechanismsforSetCoverGames,bib:MechanismDesignForSetCoverGames}. The HD algorithm is a novel approach for set covering games. This algorithm is suitable for any set covering problem and facility activation problems where the agents are allowed to make strategic decisions
\end{corollary}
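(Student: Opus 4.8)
The plan is to argue suitability by reduction: I would show that any set covering game, and any facility activation problem with strategic agents, can be encoded as a group formation game of the type $\mathcal{G}$ studied above, after which Theorem~\ref{thm:GisAPotentialGame} and its constructive corollary deliver convergence for free. First I would fix the encoding. Given a ground set of agents to be covered and a collection of candidate covering sets (facilities), each carrying an activation cost, I would let the agents be the players $M$, the candidate sets be the strategies $N$, and require each agent to select exactly one covering set to join -- precisely the strategy-tuple $\sigma$ of the model. The partition $\Pi(\sigma)$ then records which agents jointly activate each set.

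The central step is to equip this encoding with a separable and symmetric gain allocation, mirroring the clustering-profit construction of eq.~\eqref{eq:clusteringprofit}. Writing $U(G;j)$ for the (negative) cost of serving an agent group $G$ through facility $j$, I would define the symmetric bipartite utility by $v(x,y;j) = U(x,y;j) - U(x;j) - U(y;j)$ for $x \neq y$ and $v(x,x;j) = U(x;j)$, and set each player's utility to $u_x(j,\sigma_{-x}) = \sum_{y \in G_j^{\sigma}} v(x,y;j)$. Symmetricity then holds by construction, since $v(x,y;j)$ is visibly invariant under swapping $x$ and $y$; separability holds because $u_x$ is defined as the sum of these pairwise terms. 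With both structural hypotheses verified, Theorem~\ref{thm:GisAPotentialGame} applies verbatim: the induced game admits the potential $\Phi$ of eq.~\eqref{eq:potentialfunction}, hence a pure Nash equilibrium, and the HD algorithm -- which is exactly best-reply dynamics ascending $\Phi$ -- converges to it. The facility activation variant is handled identically, folding the operational cost $P_0^j$ into $U(\cdot;j)$ as already done for the MAP.

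The hard part, and the place where suitability must be stated with care, is the fidelity of this pairwise encoding. Separability forces the group utility to be reconstructed purely from two-body interactions $v(x,y;j)$, so for a general covering cost the induced potential $\Phi$ need not coincide with the true total cost unless that cost is already pairwise-decomposable -- as the $\max$-power cost of the MAP conveniently is, yielding $\Delta(x,y;j) = \min(p_{xj},p_{yj})$. I would therefore phrase the conclusion as follows: for any set covering or facility activation instance, the clustering-profit allocation produces a well-defined separable and symmetric game to which the HD algorithm provably converges to a pure equilibrium; and whenever the underlying group cost is additively pairwise (or is deliberately replaced by its clustering-profit surrogate), that equilibrium is a stable cover of the instance. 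Making precise which cost structures are captured exactly versus merely approximated is the real obstacle, and I would resolve it by exhibiting the surrogate explicitly and bounding its gap to the true objective rather than claiming exact optimality.
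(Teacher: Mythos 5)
Your proposal is correct, and it supplies in full the reasoning the paper leaves implicit: the paper states this corollary \emph{without any proof}, intending it as an immediate consequence of the constructive proof of Theorem~\ref{thm:GisAPotentialGame} (any group formation game with separable, symmetric gains converges under best reply to a pure Nash equilibrium) together with the clustering-profit gain allocation of eq.~\eqref{eq:clusteringprofit}. Your encoding --- agents as players, candidate covering sets or facilities as strategies, pairwise utilities built from $U(x,y;j)-U(x;j)-U(y;j)$ with singleton terms $U(x;j)$ --- is exactly the paper's construction (the paper additionally scales the off-diagonal terms by the clustering weight $\theta$, which your version recovers at $\theta=1$; this is a parameterization, not a discrepancy). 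Where you genuinely go beyond the paper is the fidelity caveat: you correctly observe that separability forces the potential $\Phi$ of eq.~\eqref{eq:potentialfunction} to be a pairwise surrogate that need not coincide with the true covering cost unless the group cost is pairwise-decomposable, so what the HD algorithm provably delivers is a stable cover with respect to the surrogate, not optimality of the original objective. The paper never confronts this: its corollary claims only ``suitability,'' and its own simulations tacitly concede the point by requiring $\theta$ to be tuned per scenario to approach near-optimal total power. Your proposed resolution --- exhibit the surrogate explicitly and bound its gap to the true objective --- is a stronger and more honest statement than the paper's, though note that proving such a bound is an open task neither you nor the paper actually carries out; as a proof of the corollary \emph{as stated} (convergence and applicability, not optimality), your argument is complete.
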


\section{Efficient Algorithms for the MAP}
\label{sec:efficientalgorithms}
In this section, we propose different algorithmic solutions to evaluate and compare the efficiency of the distributed algorithm. Centralized algorithms exploit the set covering problem formulation since the search space is the smallest. However, binary integer linear programs are known to be NP-complete and thus we introduce two algorithms based on dynamic programming: \emph{the hold minimum algorithm} and \emph{the column control algorithm}. Then, we develop the distributed version of the column control algorithm. A greedy solution of the problem is introduced as \emph{the nearest base station} approach. The nearest BS and the column control are known and already used in the literature for SPLP problems. We adapt these algorithms to the MAP. 

Because of the large scale nature of the collection set $\mathcal{S}$, we rather develop the algorithms by making all operations on the power cost matrix. This approach foster the iterative removals of elements in the collection set, and ensure a faster convergence.

\subsection{Optimal Solution: The Hold Minimum (HM) Algorithm}
The HM algorithm solves the problem \emph{optimally}. We explain the algorithm
by an example. Consider the power cost matrix which is given by
\begin{equation}
    \mathbf{P} = \begin{bmatrix}
                   9 & 3 \\
                   1 & 4 \\
                   2 & 8
                 \end{bmatrix}.
\end{equation}
The power cost matrix can also be shown as $\mathbf{P} =
(p_1,p_2,\ldots,p_n)$, where $p_j = (p_{1j},p_{2j},\ldots,p_{mj})^T$. In each
step, the algorithm removes a group of values $p_{ij}$ of the power cost
matrix. Removing $p_{ij}$ means that we eliminate those clusters that include
the mobile $i$ and BS $j$ from the collection set. The algorithm compares
maximum $n$ clusterings and holds only the clustering minimizing the total cost. Thus, it terminates in a step $Q$ where each mobile is assigned to only one BS. In step $s$, the power cost matrix and collection set is denoted as
$\mathbf{P}[s]=(p_1[s], p_2[s], \ldots, p_n[s])$ and $\mathcal{S}[s]$,
respectively.

Let us now turn to the example. In the initial step $s = 0$, we assume that
$\mathbf{P}[0] = \mathbf{P}$ and $\mathcal{S}[0] = \mathcal{S}$ given by
\begin{gather}
\mathcal{S}[0] = \{(1;1), (2;1), (3;1), (1,2;1), (1,3;1), (2,3;1), \notag\\
(1,2,3;1), (1;2), (2;2), (3;2), (1,2;2), (1,3;2), \notag\\
(2,3;2), (1,2,3;2)\}.
\end{gather}
Recall that assigning a cluster of mobiles $S_l$ to BS $j$ has a cost $\max_{i\in S_l} p_{ij}$. Therefore, if we find the maximum value of $p_j$, we obtain the total cost in case of all mobiles in column $j$ are assigned to BS $j$. For example, $\max p_1 = \max (9,1,2) = 9$. This means that if all mobiles are assigned to BS 1, then the total cost is $9$.

The algorithm runs as following: in step $s=1$, we find the maximum value of
each column of power cost matrix, then eliminate all values in power cost
matrix except minimum of the calculated maximum values. Namely, $\max(9,1,2) =
9$ and $\max(3,4,8) = 8$, then $9$ is eliminated by putting an $\infty$
\begin{equation}
    \mathbf{P}[1] = \begin{bmatrix}
                   \infty & 3 \\
                   1 & 4 \\
                   2 & 8
                 \end{bmatrix}.
\end{equation}
Thus, the collection set reduces to the following
\begin{gather}
\mathcal{S}[1] = \{(2;1), (3;1), (2,3;1), (1;2), (2;2), (3;2), (1,2;2), \notag
\\ (1,3;2), (2,3;2), (1,2,3;2)\}.
\end{gather}
First column contains an $\infty$ which means that mobile $1$ must be assigned
to another BS (i.e. in this example, obviously BS $2$). In fact, this
represents the recursiveness of the algorithm where we run the algorithm for a
sub power cost matrix. In this simple example, the sub power cost matrix is
$3$. In general case, the algorithm does the following
\begin{equation}
    P_j[s] = \left\{
                  \begin{array}{ll}
                    \max p_{j}[s]  + \mathfrak{h}(\mathbf{P}^{sub}_{j}[s]), &
                    \hbox{if sub power cost matrix;} \\
                    \max p_{j}[s], & \hbox{otherwise.}
                  \end{array}
                \right.
\end{equation}
where $P_j[s]$ represents the total cost if we assign all mobiles to BS $j$
except those that can not be assigned to, and the optimal cost occurring
due to
sub power cost matrix $\mathbf{P}^{sub}_j[s]$ in step $s$. Here, $\mathfrak{h}
: \Re^{m\times n} \rightarrow \{\Re, \mathcal{A}\}$ is the function
which gives the optimal value and assignments obtained by running HM algorithm.

For $s=2$, we calculate $P_1[2] = \max(1,2) + \mathfrak{h}(3) = 2 + 3=5$,
where $\mathbf{P}^{sub}_1[2] = (3)$. On the other hand, we do not need to
calculate $P_2[2]$ since it is kept in the memory. Therefore, $P_2[2] =
P_2[1]$. Then, the algorithm holds minimum value of $\min(P_1[2],P_2[2]) =
P_1[2] = 5$ meaning that we remove $8$ resulting in the following
\begin{equation}
    \mathbf{P}[2] = \begin{bmatrix}
                   \infty & 3 \\
                   1 & 4 \\
                   2 & \infty
                 \end{bmatrix},
\end{equation}
and
\begin{gather}
\mathcal{S}[2] = \{(2;1), (3;1), (2,3;1), (1;2), (2;2), (1,2;2)\}.
\end{gather}
Then, for $s=3$, $P_2[3] = \max(3,4) + \mathfrak{h}(2) = 4+2=6$, where
$\mathbf{P}^{sub}_2[3] = (2)$, and $P_1[3]=P_1[2]$. We remove $4$, since
$\min(P_1[3],P_2[3]) = P_1[3]$. This gives the following matrix, collection
set, assignments, and optimal total power,
\begin{equation}
    \mathbf{P}[3] = \begin{bmatrix}
                   \infty & 3 \\
                   1 & \infty \\
                   2 & \infty
                 \end{bmatrix},
\end{equation}
\begin{gather}
\mathcal{S}[3] = \{(2;1), (3;1), (2,3;1), (1;2)\} = \{(2,3;1), (1;2)\},
\end{gather}
\begin{equation}
    \mathfrak{h}(\mathbf{P}) := \left\{ p = 5, \mathbf{A} = \begin{bmatrix}
                   0 & 1 \\
                   1 & 0 \\
                   1 & 0
                 \end{bmatrix}
                 \right\},
\end{equation}
respectively. The pseudo-code of this algorithm is given in Algorithm \ref{alg:holdmin}.

\begin{theorem}
\emph{The HM algorithm terminates in finite step. At this step, the total power cost is minimum.}
\end{theorem}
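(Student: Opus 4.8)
The plan is to prove the two assertions in turn: first that the recursion halts, then that the value it reports at that point equals the optimum. For termination I would use the number of finite entries of the current matrix $\mathbf{P}[s]$ as a monotone potential. Each iteration replaces at least one finite $p_{ij}$ by $\infty$, so this count strictly decreases; feasibility forces every mobile to keep at least one finite entry, so the count never drops below $m$. Hence at a given recursion level at most $mn-m$ eliminations occur, bounding the number of outer steps by $Q\le m(n-1)$. Since every recursive call $\mathfrak{h}(\mathbf{P}^{sub}_j[s])$ runs on a strictly smaller instance (either fewer rows, when some mobiles remain assignable to BS $j$, or fewer columns, when all are forced away), an induction on the size $m+n$ shows the recursion bottoms out, giving overall finiteness.

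The engine of the optimality proof is an elimination lemma. Let $v^\star=\min_{j\in N}\max_{i\in M}p_{ij}$ be the smallest column-maximum. Assigning every mobile to a BS attaining $v^\star$ is feasible and costs exactly $v^\star$, so the optimal cost satisfies $p^\star\le v^\star$. Because the total cost of any assignment is the sum of the per-BS costs $\max_{i\in G_j}p_{ij}\ge 0$, every open BS in an optimal solution has cost at most $p^\star\le v^\star$; consequently no optimal solution can route a mobile $i$ through a BS $j$ with $p_{ij}>v^\star$. The entry discarded at each step is the maximum of some non-selected column, hence a column-maximum, hence $\ge v^\star$ by definition of $v^\star$. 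When it strictly exceeds $v^\star$ it is used by no optimal solution and can be set to $\infty$ freely; when it equals $v^\star$ one exhibits an alternative optimum that places all mobiles on another column attaining $v^\star$ (which survives the removal). Either way the elimination leaves $p^\star$ unchanged, and since restricting the feasible set can only raise the optimum, it cannot drop below $p^\star$ either.

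With the lemma in hand I would hang correctness on a single invariant: the optimal cost of the current restricted instance always equals $p^\star$. It holds initially, and each elimination preserves it by the lemma, so it persists through step $Q$. At termination every mobile has exactly one surviving BS, so the restricted instance admits a unique feasible assignment $\mathbf{A}$, whose cost is thereby forced to equal the preserved optimum $p^\star$. The recursive quantities $P_j[s]=\max p_j[s]+\mathfrak{h}(\mathbf{P}^{sub}_j[s])$ and the ``hold minimum'' rule are exactly the bookkeeping that evaluates this cost and recovers $\mathbf{A}$, the sub-matrix calls being optimal by the induction hypothesis.

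The hardest part will not be the termination count but making the elimination lemma airtight against two boundary phenomena. First, the feasibility safeguard: the step must never remove a mobile's last finite entry, and I must confirm that ``remove the largest non-minimal column-maximum'' together with the forcing into $\mathbf{P}^{sub}_j[s]$ always leaves each row non-empty and does drive the matrix to one entry per row. Second, the tie case $p_{ij}=v^\star$, where the discarded entry lies in a column that itself attains $v^\star$; here I must verify that a competing optimum genuinely survives the removal, which requires that some column attaining $v^\star$ remain fully available. I would therefore invest most of the effort in stating and proving the elimination lemma with these two cases handled explicitly, after which the invariant argument and the ``hold minimum'' bookkeeping are routine.
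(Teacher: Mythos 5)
Your termination argument is sound (indeed more careful than the paper's, which simply notes that at least $n-1$ entries are removed per step), and your overall skeleton --- a per-round elimination lemma feeding the invariant ``the optimum of the restricted instance stays equal to $p^\star$'' --- is the right way to make the paper's two-line optimality claim rigorous. But your elimination lemma, as you prove it, is valid only when the current matrix contains no $\infty$ entries, and that is essentially never the situation: the model itself sets $p_{ij}=\infty$ for out-of-range pairs, and the algorithm manufactures $\infty$'s from its first step onward. Once every column contains an $\infty$, both pillars of the lemma collapse. The bound $p^\star\le v^\star$ rested on the feasibility of assigning all mobiles to the single column attaining $v^\star$; with an $\infty$ in each column no such solution exists. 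Concretely, take the paper's own worked instance at step $2$, with rows $(\infty,3)$, $(1,4)$, $(2,\infty)$: reading column maxima over finite entries gives $v^\star=\min(2,4)=2$, yet the optimum of this instance is $5$ and every optimal solution uses the entry $3>v^\star$, so ``no optimal solution routes through an entry exceeding $v^\star$'' is false here; reading the maxima literally gives $v^\star=\infty$, and the lemma then says nothing about the entry $4$ that the algorithm actually discards at that step. What the algorithm compares are not raw column maxima but the recursive scores $P_j[s]=\max p_j[s]+\mathfrak{h}(\mathbf{P}^{sub}_j[s])$, and the lemma you need is: any feasible solution assigning the mobile attaining $\max p_{j'}[s]$ to a losing BS $j'$ costs at least $P_{j'}[s]\ge P_{j_{\min}}[s]$. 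Proving that requires the monotonicity observation that serving a sub-cluster never costs more than serving it inside a larger solution, plus the induction hypothesis on $\mathfrak{h}$; your closing remark that the recursion is ``exactly the bookkeeping'' underestimates this --- the recursive case is the generic case, and it is where the proof lives.

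The two boundary phenomena you flag are real, but they too must be handled in this recursive setting, where they are harder than your sketch suggests. In your tie case the surviving optimum is ``all mobiles on another column attaining $v^\star$'', which uses only entries of that one column and hence trivially survives the removals; the genuine surviving candidate is the composite solution ``held column $j_{\min}$ plus an optimum of $\mathbf{P}^{sub}_{j_{\min}}$'', whose subproblem part uses entries \emph{outside} the held column --- entries that the same round of removals (the maxima of all $n-1$ losing columns) can in principle destroy. Your feasibility safeguard is of exactly the same nature: a row finite only in a losing column, at that column's maximum, would be orphaned unless the entries used by the held composite assignment are explicitly protected, so the lemma must quantify over the entire set of entries removed in a round, not one entry at a time. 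For comparison, the paper's own proof consists of the assertions ``removed values are those that increase the total cost'' and ``the value that is held is the minimum one'' --- i.e., it asserts precisely the lemma you set out to prove and skips everything above. Your plan is strictly more rigorous in structure than the paper's argument, but as it stands it establishes the key lemma only for the degenerate fully-finite case, so the optimality half of the theorem remains open.
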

\begin{proof}
The power cost matrix is transformed in each step by
\begin{equation}
\mathbf{P}[0]\rightarrow \mathbf{P}[1] \rightarrow \ldots \rightarrow \mathbf{P}[Q].
\end{equation}
In each step, at least $n-1$ values are removed from the power cost matrix.
Removed values are those that increase the total cost. The value that is held
is the minimum one in the corresponding step. So, in the terminal step $s =
Q$, it arrives to such a cost that is the lowest. Each mobile is assigned to
exactly one BS in the terminal step.
\end{proof}

\begin{algorithm}
\caption{The Hold Minimum}
\label{alg:holdmin}
\algorithmicfunction $(p,\mathbf{A})=\mathfrak{h}(\mathbf{P})$
\begin{algorithmic}
\WHILE{each row includes several ``$1$''s in $\mathbf{A}$}
    \STATE $k \leftarrow 1$
    \FOR{$j = $ indices of columns of $\mathbf{P}$ not including all $\infty$}
        \STATE $V_k \leftarrow$ maximum of column $j$ except $\infty$
        \IF{any row in column $j$ includes $\infty$}
            \STATE $\mathbf{P}^{sub} \leftarrow$ rows including $\infty$ of $\mathbf{P}$
            \STATE $p_{sub} \leftarrow \mathfrak{h}(\mathbf{P}^{sub})$
            \STATE $V_k \leftarrow V_k + p_{sub}$
        \ENDIF
        \STATE $k \leftarrow k + 1$
    \ENDFOR
    \STATE $(i_{min}, j_{min}, V_{min}) \leftarrow \min{V}$
    \STATE Hold $V_{min}$ using $(i_{min}, j_{min})$ and put $\infty$ in all indices causing $P_{ij} \geq V_{min}$ in $\mathbf{P}$
    \STATE Put $0$ in all indices causing $P_{ij} \geq V_{min}$ in $\mathbf{A}$
\ENDWHILE
\STATE $p \leftarrow \sum{\max{\mathbf{A}\bigotimes\mathbf{P}}}$
\end{algorithmic}
\algorithmicendfunction
\end{algorithm}

The complexity of this algorithm is assessed in Section \ref{sec:Complexity}.

\subsection{Greedy Solution: The Column Control (CC) Algorithm}
Let us denote the CC algorithm by $\mathfrak{c}:\Re^{m\times n}
\rightarrow \{\Re, \mathcal{A}\}$. This algorithm exploits the assumption that the operational power cost $P_0$ is fairly higher than the transmitted power, i.e. $P_0 \gg P_{ij}$. The aim here is to assign many mobiles to only one BS. Recall that cluster $N_j$ (set of mobiles that can be assigned to BS $j$) has the transmission cost $\max_{i\in N_j} P_{ij}$ and the operational cost $P_0^j$. For better understanding, consider the following power cost matrix in which the operational power cost is assumed to be $12$ $W$:
\begin{equation}\label{powerMatrixCC}
    \mathbf{P} = \begin{bmatrix}
                   12.50 & 12.40 & 12.32 & \infty \\
                   12.30 & 12.30 & 12.43 & \infty \\
                   12.20 & 12.45 & 12.15 & 12.23 \\
                   \infty & 12.43 & 12.25 & 12.35 \\
                   \infty & \infty & \infty & 12.29
                 \end{bmatrix}.
\end{equation}
In such a scenario, we can assign $|N_1| = 3$ mobiles to BS 1 with cost $12.50$, $|N_2| = 4$ mobiles to BS 2 with cost $12.45$, $|N_3| = 4$ mobiles to BS 3 with cost $12.43$, and $|N_4| = 3$ mobiles to BS 3 with cost $12.35$.

The logic behind the CC algorithm is the following:
\begin{enumerate}
  \item \emph{Find how many mobiles can be assigned to each BS}
  \item \emph{Choose the BS to which it can be assigned the most mobiles}
  \item \emph{If there are multiple BSs in the state of 2), then choose the BS which can serve the mobiles with minimal cost}
\end{enumerate}
Applying these rules to the last example, it turns out that BS 3 can cover the most mobiles $|N_3| = 4$ which are $N_3 = (1,2,3,4)$ with the lowest cost $12.43$. Then, the algorithm assigns only a cluster of mobiles. In the following step, the CC algorithm performs the same rules to the
remained mobiles which produces the sub power cost matrix $\mathbf{P}^{sub}$.
In the last example, it is given by
$
    \mathbf{P}^{sub} = \begin{bmatrix}
                   \infty & \infty & \infty & 12.29
                 \end{bmatrix}.
$
This results in the assignment of mobile 5 to BS 4, because mobile 5 can not be assigned to other BSs. This shows the recursiveness of the CC algorithm. A pseudo-code is given in Algorithm \ref{alg:columncontrol}.

Formally, in step $s$, we denote as following the set of mobiles assigned to
BS $j$:
$
R[s] = \left\{ \textrm{Assigned mobiles in step $s$} \right\}
$. Thus, the collection set is reduced as following in step $s$:
\begin{equation}
\mathcal{S}[s] = \left\{ \mathcal{S}[s-1] \setminus (S;k) : i\in S, \forall i
\in R[s] \textrm{ and } \forall k\in N\setminus j\right\}.
\end{equation}
Considering the last example, in step $s=1$, $R[1] = (1,2,3,4)$. By assigning
these mobiles to BS 3, the CC algorithm removes those assignments which
include mobiles $(1,2,3,4)$ except $(S;k) = \{(1,2,3,4),3\}$. Thus, the total
power and assignments are given by
\begin{equation}
\mathfrak{c}(\mathbf{P}) := \left\{p = 24.72,
\mathbf{A} = \begin{bmatrix}
                   0 & 0 & 1 & 0 \\
                   0 & 0 & 1 & 0 \\
                   0 & 0 & 1 & 0 \\
                   0 & 0 & 1 & 0 \\
                   0 & 0 & 0 & 1
                 \end{bmatrix}
\right\}.
\end{equation}

Moreover, for any power cost matrix, we conclude that in a final step $Q$, the CC algorithm converges to the case where each mobile is assigned to one BS.

\begin{algorithm}
\caption{The Column Control}
\label{alg:columncontrol}
\algorithmicfunction $(p,\mathbf{A}) = \mathfrak{c}(\mathbf{P})$
\begin{algorithmic}
\STATE $v' = $ Find how many $\infty$ has each column of $\mathbf{P}$
\STATE $v_{min} \leftarrow \min{v'}$
\STATE $v'' = $ Find which row of $v'$ is equal to $v_{min}$
\FOR{$l = $ columns of $\mathbf{P}$ determined by $|v''|$}
    \STATE $V_l \leftarrow$ maximum value of column $l$ of $\mathbf{P}$
\ENDFOR
\STATE Find $V_{min} = \min{V_l}$ and the corresponding column $l_{min}$
\IF {$|v'|==0$}
    \STATE Put 1s to the column $l_{min}$ in $\mathbf{A}$
\ELSE
    \STATE $\mathbf{A}' \leftarrow$ Put 1s to the column $l_{min}$ in $\mathbf{A}$
    \STATE Find sub power cost matrix $\mathbf{P}^{sub}$ which is composed by
    those mobiles that are not assigned
    \STATE $\mathbf{A}^{sub} \leftarrow$ Find assignments by running $\mathfrak{c}(\mathbf{P}^{sub})$
\ENDIF
\STATE $\mathbf{A} \leftarrow$ Combine $\mathbf{A}^{sub}$ and $\mathbf{A}'$
\STATE $p \leftarrow \sum{\max{\mathbf{A}\bigotimes\mathbf{P}}}$
\end{algorithmic}
\algorithmicendfunction
\end{algorithm}

\subsection{Distributed Column Control (DCC) Algorithm}\label{dcc}
Assume that each BS broadcasts its own power vector and identities of mobiles
that it can serve. Recall that we denote the power vector of BS $j$ as $p_j$.
The power vector of BS 1 given in the power cost matrix of eq.
(\ref{powerMatrixCC}) is $p_1 = (12.50, 12.30, 12.20, \infty, \infty)^T$. From a practical point of view, the BS broadcasts only the identity and associated costs of the mobiles which are associated with it. The infinity values are not broadcast. Therefore, the BS broadcasts the power vector as $p_1 = (12.50, 12.30, 12.20)^T$, and an identity vector represented as $h_j = (h_{jk})\in \mathbb{N}^{|p_j|}$. For example, $h_1 = (1,2,3)^T$.

Moreover, each mobile receives power vectors from all BSs that can transmit to
it. Then, each mobile generates the power cost matrix from received power
vectors. For example, mobile 1 receives from BS 1, BS 2, and BS 3 the power
vectors \[p_1 = (12.50, 12.30, 12.20)^T,\] \[p_2 = (12.40, 12.30, 12.45,
12.43)^T,\] and \[p_3 = (12.32, 12.43, 12.15, 12.25)^T\] with identity vectors
$h_1 = (1,2,3)^T$, $h_2 = (1,2,3,4)^T$, and $h_3 = (1,2,3,4)^T$, respectively.
Mobile 1 decides that the power cost matrix is as following:
\begin{equation}
                \begin{bmatrix}
                   12.50 & 12.40 & 12.32 \\
                   12.30 & 12.30 & 12.43 \\
                   12.20 & 12.45 & 12.15 \\
                   \infty & 12.43 & 12.25 \\
                 \end{bmatrix}.
\end{equation}
Note that mobile 1 realizes from $h_2$ and $h_3$ that BS 1 can not transmit to
mobile 4. Therefore, it puts an infinite cost corresponding to mobile 4 and BS
1 in the power cost matrix.

By this rule each mobile determines its own power cost matrix. Thus, each
mobile finds the assignments according to CC algorithm, and selects the BS
from which it will receive data. For example, mobile 1 obtains the following
assignment matrix by running CC algorithm
\begin{equation}
                \begin{bmatrix}
                   0 & 0 & 1 \\
                   0 & 0 & 1 \\
                   0 & 0 & 1 \\
                   0 & 0 & 1 \\
                 \end{bmatrix}.
\end{equation}
It turns out that mobile 1 chooses BS 3 for reception the broadcast data.

\begin{remark}
Through the advantage of the decentralization, the DCC algorithm makes possible the following: \emph{if the mobiles do not send any assignment information to a BS, then the corresponding BS is considered to be switched off}. On the other hand, the CC algorithm is centralized, therefore, \emph{the network determines according to the assignments which BS is switched off}.
\end{remark}

\subsection{Greedy Solution: The Nearest BS (NBS) Algorithm}
To solve the problem heuristically, the easiest way is to assign each mobile to the nearest BS. By ``nearness'', we do not mean a geographical measure, instead, it is the lowest power cost that the corresponding mobile needs from the corresponding BS. So, the mobile selects the BS transmitting with the lowest power. We assume that a mobile is capable to know the power costs corresponding to those BSs that can transmit to it.

Clearly, mobile $i$ knows the vector $p_i = (p_{i1}, p_{i2},\ldots, p_{in_i})$ where $n_i$ denotes the number of BSs that mobile $i$ can be served. Then, mobile $i$ only calculates the minimal value of $p_i$, and chooses the corresponding BS,
\begin{equation}
a_{i,j} = 1: \quad j = \arg\min_{j} p_i,
\end{equation}
where $a_{i,j} \in \mathbf{A}$ is the BS that mobile $i$ selects. Actually, this corresponds to remove from collection set $\mathcal{S}$ all assignments related to mobile $i$ and the BSs being out BS $j$.

The NBS algorithm is very efficient and quick, and in most cases, it gives optimal assignments when the operational power cost $P_0$ is neglected (See Section \ref{sec:simulationresults}).

\subsection{Greedy Set-Cover Algorithm (greedy-SC)}
In \cite{chvatal}, a greedy heuristic for the set covering problem is proposed by Chvatal. We utilize this algorithm in simulation results (section \ref{sec:simulationresults}) to compare it with the proposed algorithms.

%

\section{Time Complexity Analysis}
\label{sec:timecomplexity}
In this section, we estimate and compare the time complexity of the proposed algorithms.

\begin{theorem}
\emph{The time complexity of the HM algorithm is $O(m^3 n) + (n-1)\sum_{s=1}^{m} T(m-s,n-1) + m O(n)$}.
\end{theorem}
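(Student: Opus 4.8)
The plan is to let $T(m,n)$ denote the worst-case running time of $\mathfrak{h}$ on an $m\times n$ power cost matrix, and to read off a recurrence directly from the pseudo-code of Algorithm \ref{alg:holdmin} by separately accounting for (i) the non-recursive bookkeeping performed across all iterations of the outer \textbf{while} loop, (ii) the recursive calls $\mathfrak{h}(\mathbf{P}^{sub})$ triggered inside the \textbf{for} loop, and (iii) the closing evaluation $p\leftarrow\sum\max(\mathbf{A}\otimes\mathbf{P})$. These three contributions are intended to match, in order, the three summands $O(m^3 n)$, $(n-1)\sum_{s=1}^{m}T(m-s,n-1)$, and $m\,O(n)$ of the claimed bound.

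First I would bound the number of outer iterations. By the termination argument already established for the HM algorithm (each step removes at least $n-1$ entries of $\mathbf{P}$ and fixes at least one mobile's assignment), the \textbf{while} loop executes at most $m$ times; I index these steps by $s=1,\dots,m$. Within step $s$ the non-recursive work consists of scanning the at most $n$ admissible columns and extracting each column maximum excluding $\infty$ at cost $O(m)$ per column, selecting $V_{min}$ in $O(n)$, and the thresholding updates that write $\infty$ into $\mathbf{P}$ and $0$ into $\mathbf{A}$ at every index with $P_{ij}\geq V_{min}$, each a full $O(mn)$ pass together with the extraction of the rows carrying an $\infty$. Bounding this per-step cost by $O(m^2 n)$ and summing over the $\le m$ steps yields the leading term $O(m^3 n)$.

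Next I would localize the recursion. At step $s$ exactly $s$ mobiles have already been pinned to a single BS, so the sub power cost matrix $\mathbf{P}^{sub}$ built from the rows still carrying an $\infty$ has $m-s$ rows; since the column just held is excluded, it has $n-1$ columns. The \textbf{for} loop may invoke $\mathfrak{h}(\mathbf{P}^{sub})$ once for each of the remaining $n-1$ columns, so step $s$ contributes at most $(n-1)\,T(m-s,n-1)$, and summing over $s$ gives the recursive term $(n-1)\sum_{s=1}^{m}T(m-s,n-1)$. Finally, the closing line computes, for each of the $m$ mobiles, a maximum over its $n$ entries of $\mathbf{A}\otimes\mathbf{P}$ and accumulates the result, i.e. $m\,O(n)$. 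Adding the three contributions gives the stated complexity.

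I expect the main obstacle to be pinning down the recursive subproblem sizes and multiplicities precisely, namely arguing that at step $s$ the sub-matrix is genuinely $(m-s)\times(n-1)$ and that at most $n-1$ recursive calls are issued per step, while ensuring the non-recursive per-step scans are charged once to the $O(m^3 n)$ term and not conflated with the cost already hidden inside $T(m-s,n-1)$. Care is also needed because the outer loop can fix several mobiles in a single step, so the bound of $m$ steps, and hence the index range $s=1,\dots,m$, is an upper estimate rather than an exact count; the argument therefore only needs monotonicity of $T$ in both arguments to keep the summation a valid upper bound.
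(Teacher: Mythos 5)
Your proposal is correct and takes essentially the same route as the paper: both read the recurrence $T(m,n)=\sum_{s=1}^{m}\bigl[(n-1)\bigl(O(m-s)+T(m-s,n-1)\bigr)+O(n)\bigr]$ off the pseudo-code, using at most $m$ outer steps, at most $n-1$ recursive calls per step on an assumed $(m-s)\times(n-1)$ sub-matrix, and linear-time selection, and then identify the three summands of the stated bound. If anything, your explicit treatment of the slack (over-bounding the non-recursive per-step work so the first term reads $O(m^3 n)$ rather than the $O(m^2 n)$ the accounting naturally yields) and your appeal to monotonicity of $T$ when a step fixes several mobiles are slightly more careful than the paper's unstated ``technical manipulations.''
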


\begin{proof}

%

Note that at most $n-1$ operations are required to find
the maximum value of a column of the power cost matrix in one step, and a single
operation for finding the minimum of a vector having $n$ values. In each step, there is at least one element in a column that is removed. At most
$m$ steps are needed until a convergence to the optimal case. Selection
algorithms (finding maximum or minimum) have a time complexity in $O(n)$
\cite{complexityofmaximum}. We can express as following the time complexity including the recursive property of the algorithm:
\begin{equation}
T(m,n) = \sum_{s=1}^m \left[ (n-1)\left( O(m-s) + T(m-s,n-1) \right) +O(n) \right]
\end{equation}
where we assume that in each step $ s $ a sub-matrix occurs with complexity $ T(m-s,n) $. Solving this kind of recurrence is tricky but further technical manipulations lead to following expression:
\begin{equation}
T(m,n) = O(m^3 n) + (n-1)\sum_{s=1}^{m} T(m-s,n-1) + m O(n).
\end{equation}
We also simulate the last recurrence resulting in the following graphical characteristics:
\begin{equation*}
\includegraphics[width=\linewidth]{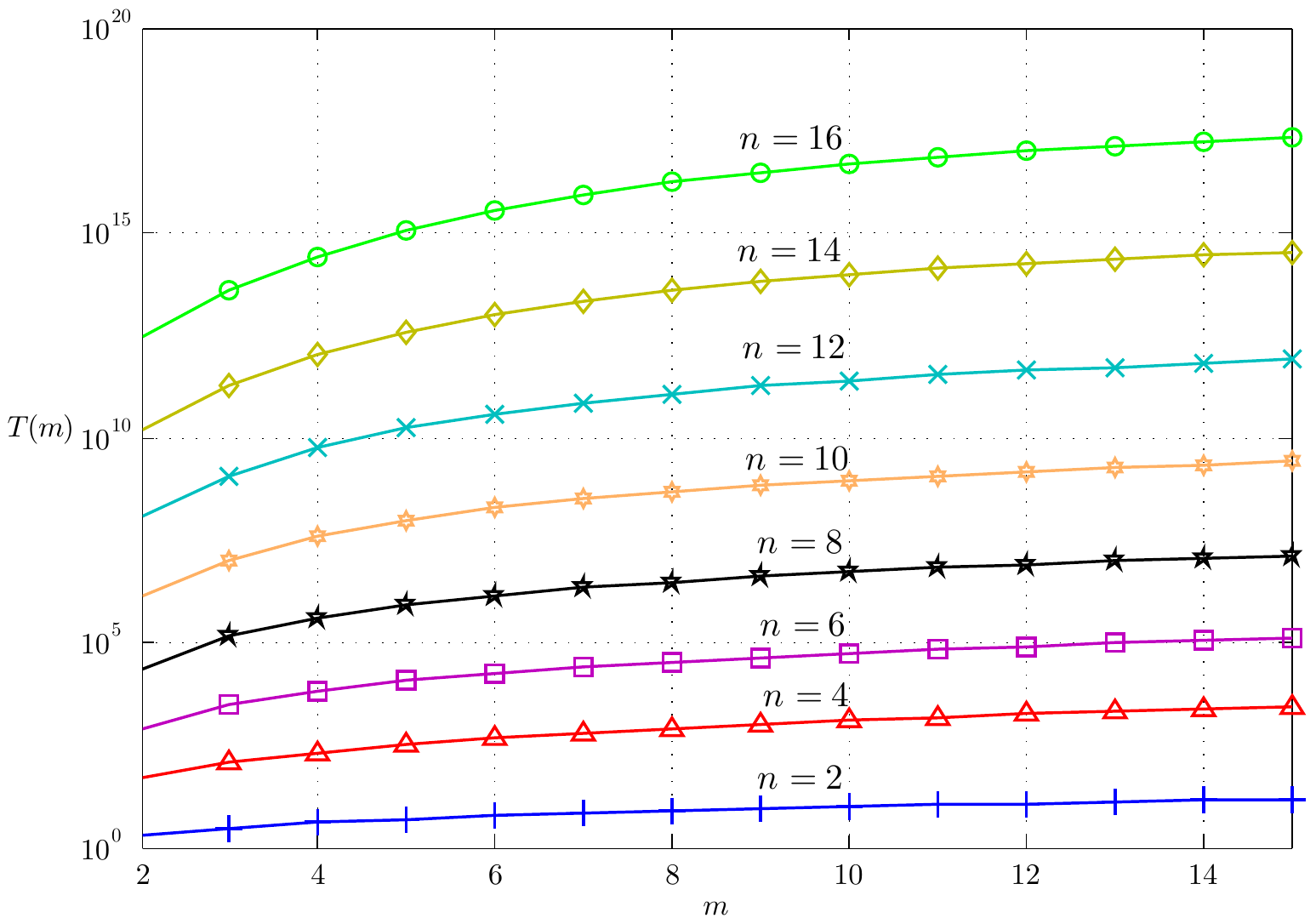}
\end{equation*}
In the axis denoting $ T(m) $, a log-scale is used. The figure illustrates that the complexity grows exponentially.
\end{proof}

\begin{theorem}
\emph{The time complexity of the CC algorithm is $O\left( mn
\right)$}.
\end{theorem}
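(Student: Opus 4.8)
The plan is to bound the running time of a single invocation of the recursion $\mathfrak{c}$ (the body of Algorithm~\ref{alg:columncontrol} up to the recursive call) and then sum this cost over all invocations. First I would read the elementary operations off the pseudo-code. Computing $v'$, the number of $\infty$ entries in each of the $n$ columns, requires one pass over the current matrix and costs $O(mn)$. Taking $v_{min}=\min v'$ and locating the tied columns recorded in $v''$ costs $O(n)$. The FOR loop computes, for each tied column, its maximum finite entry, which over all tied columns amounts to at most one further pass over the matrix, hence $O(mn)$. Selecting $V_{min}$ together with $l_{min}$ is $O(n)$, and writing the $1$'s into column $l_{min}$ of $\mathbf{A}$ is $O(m)$. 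So the body of a single invocation on an $m_s\times n$ sub-matrix costs $O(m_s n)$.

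Next I would handle the recursion. The call recurses on the sub-matrix $\mathbf{P}^{sub}$ formed by the mobiles not yet assigned, so its row count strictly decreases and the process terminates in some final step $Q$. Writing $k_s$ for the number of mobiles assigned in step $s$, we have $\sum_{s} k_s = m$. The naive summation $\sum_s O(m_s n)$ is $O(mn)$ only when each step clears a large cluster; to obtain $O(mn)$ unconditionally I would maintain the column statistics incrementally rather than recomputing them. Concretely, I compute $v'$ and the column maxima once in an initial $O(mn)$ pass; then, whenever a cluster of $k_s$ mobiles is removed, I update the per-column counts, scanning only the removed rows. Charging each such update to the mobile being removed, and noting that each mobile is removed exactly once and touches at most $n$ columns, the total update cost is $\sum_s O(k_s n)=O(mn)$. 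Combined with the initial pass this yields the claimed $O(mn)$.

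The main obstacle is the maintenance of the column maxima under row deletion. The counts $v'$ are monotone and trivially decremented, but a column maximum can drop when its maximiser is removed, and recomputing a maximum from scratch in every step is exactly what reintroduces an $O(m^2 n)$ worst case. I would therefore argue that a maximum is needed only for the columns tied for the minimum count, and only to evaluate the assignment cost, so its recomputation can be folded into the same per-mobile charging argument; alternatively, one observes that in the target regime $P_0\gg P_{ij}$ the algorithm assigns large clusters, so only few steps occur and the simpler summation $\sum_s O(m_s n)$ already collapses to $O(mn)$. The remaining work is to make the charging argument precise and to state explicitly which implementation attains the linear bound.
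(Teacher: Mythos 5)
Your per-invocation accounting is correct, and your diagnosis of where the difficulty lies is also correct --- but the proof does not close the gap it identifies, and the proposed fixes fail. The charging scheme (charge each update to the mobile being removed, at most $n$ columns per mobile, hence $O(mn)$ total) is sound for the counts $v'$, because those are monotone and each deleted row touches each column exactly once. It is \emph{not} sound for the column maxima: when the maximizer of column $j$ is deleted, recomputing $\max$ of column $j$ requires a scan of the \emph{surviving} rows, and that cost cannot be charged to the deleted mobile. An adversarial instance can arrange for the current maximizer of some column to be deleted in nearly every step (e.g., each step assigns a single mobile that is simultaneously the maximizer of many columns), forcing $\Theta(m)$ recomputations of $\Theta(m)$ cost each --- that is $\Theta(m^2)$ for one column and $\Theta(m^2 n)$ overall, exactly the bound you set out to avoid. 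Presorting each column to make deletions cheap costs $O(nm\log m)$, which already exceeds the target; and the tie-breaking rule of the algorithm (choose the cheapest among the columns covering the most mobiles) means the maxima cannot simply be dropped or deferred. Your fallback argument --- that in the regime $P_0 \gg P_{ij}$ the algorithm assigns large clusters, so few steps occur and $\sum_s O(m_s n)$ collapses to $O(mn)$ --- is an instance-dependent observation, not a worst-case analysis, so it cannot establish the theorem as stated, which asserts the bound unconditionally.

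For comparison, the paper does not attempt any of this: its entire proof is a one-line citation, stating that the result is known and proved in Feder and Greene \cite{bib:ApproximateClustering}. So your route is genuinely different in kind --- a direct implementation-level analysis rather than an appeal to the clustering literature --- and if completed it would be more self-contained and more informative than what the paper offers, since it would pin down exactly which implementation attains the linear bound. As it stands, however, the key lemma (amortized maintenance of the column maxima under row deletion within an $O(mn)$ total budget) is precisely the missing piece, and without it your argument proves only $O(m^2 n)$, or $O(mn)$ under the extra hypothesis that the number of recursive steps is bounded by a constant.
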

\begin{proof}
This result is known and a complete proof is available \cite{bib:ApproximateClustering}.
\end{proof}

\begin{theorem}
\emph{The time complexity of the NBS algorithm is $O\left(n \right)$}.
\end{theorem}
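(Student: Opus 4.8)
The plan is to exploit the purely local and embarrassingly parallel nature of the NBS rule: each mobile acts independently on its own power vector, so the quantity to bound is the per-mobile cost rather than an aggregate over all $m$ mobiles. First I would recall from the algorithm description that mobile $i$ is assumed to already hold the vector $p_i = (p_{i1},\ldots,p_{in_i})$ with $n_i \le n$ (acquiring this vector is part of the model, not of the computation), and that its entire computation reduces to evaluating $j = \arg\min_{j} p_i$, i.e. a single minimum-selection over a vector of length at most $n$, followed by setting $a_{i,j} \leftarrow 1$.

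Next I would invoke the standard fact --- already used for the HM bound and cited via \cite{complexityofmaximum} --- that selecting the minimum (or maximum) of a vector of length $n$ costs $O(n)$ comparisons. Since $n_i \le n$, each mobile's selection is $O(n_i) = O(n)$, while the assignment together with the zeroing of the remaining entries is $O(1)$ bookkeeping that leaves the order unchanged. Hence a single mobile finishes its decision in $O(n)$ time.

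Finally I would close with the distributed accounting argument: because every mobile runs this rule simultaneously and without coordination, the running time of the scheme is that of one mobile, namely $O(n)$, rather than the $O(mn)$ one would read off a sequential centralized sweep over all mobiles. The only delicate point --- and the step I expect to require the most care --- is precisely this accounting choice: I must state explicitly that we are measuring the per-mobile (parallel) time of the decentralized algorithm, since that is what legitimately removes the factor $m$ and yields the claimed $O(n)$ bound. Everything else is a routine verification that each listed operation is either a single selection or constant-time bookkeeping.
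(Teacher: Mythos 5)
Your proposal is correct and follows essentially the same route as the paper, whose proof is the one-line observation that the only operation performed is a minimum-selection over a vector of dimension at most $n$, hence $O(n)$. The only difference is that you make explicit the per-mobile (parallel) accounting that removes the factor $m$, which the paper leaves implicit in calling the algorithm distributed --- a worthwhile clarification, but not a different argument.
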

\begin{proof}
It is straightforward because the only operation that is performed in the NBS algorithm is to find the minimum value of a vector of power cost matrix having a dimension at most $n$. Therefore, the time complexity is $O\left( n \right)$.
\end{proof}

\begin{theorem}
The time complexity of the HD algorithm is PLS-complete with $ \rho(m,n)mO\left(n\right) $.
\end{theorem}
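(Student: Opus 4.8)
The plan is to decompose the statement into its two independent assertions: that the local-search problem solved by the HD best-reply dynamics is \emph{PLS-complete}, and that one run of the algorithm costs $\rho(m,n)\,m\,O(n)$, where $\rho(m,n)$ denotes the number of improving rounds until stability.

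First I would establish membership in PLS by casting the HD dynamics as a polynomial local search problem in the sense of Johnson--Papadimitriou--Yannakakis. The feasible solutions are the strategy-tuples $\sigma\in N^m$; an initial one is obtained in $O(m)$ time (e.g. by the nearest-BS assignment); the objective is the potential $\Phi(\sigma)$ of \eqref{eq:potentialfunction}, which is polynomial-time evaluable; and the neighbourhood of $\sigma$ is the set of tuples in which a single mobile changes its BS. By Theorem~\ref{thm:GisAPotentialGame}, a unilateral deviation strictly increases a player's utility if and only if it strictly increases $\Phi$, so the improving-move oracle ``return a profitable single deviation, or declare stability'' is exactly one best-reply step of the HD algorithm and runs in polynomial time. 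Hence the problem lies in PLS, and its local optima coincide with the pure Nash equilibria, i.e. the maximizers of $\Phi$.

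For PLS-hardness I would use the fact, already noted in the text, that $\mathcal{G}$ generalizes party-affiliation games \cite{bib:partyAffiliation}, which are PLS-complete through their equivalence to local max-cut under the single-flip neighbourhood. The point is that party affiliation is the special case $n=2$ of $\mathcal{G}$ in which the symmetric bipartite utilities $v(x,y;s)$ encode the edge weights; every party-affiliation instance is then literally an instance of $\mathcal{G}$, and its flip-local optima are precisely the single-deviation local optima of the corresponding instance of $\mathcal{G}$. This instance-containment furnishes a (trivial) PLS-reduction, so computing an equilibrium of $\mathcal{G}$ by best reply is PLS-hard; together with membership this gives PLS-completeness. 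The main obstacle lies exactly here: one must check that this embedding is a \emph{genuine} PLS-reduction, carrying local optima back to local optima, and one must be candid that the hardness exploits the full generality of the symmetric bipartite utility $v$, whereas the concrete clustering choice $v(x,y;j)=\theta\min(p_{xj},p_{yj})$ of \eqref{eq:HedonicDecisionPreferenceFunction} is only one member of the class.

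Finally I would count the cost of a single execution. Let $\rho(m,n)$ be the number of improving rounds the dynamics performs before reaching stability; as the content of PLS-completeness, this quantity admits no polynomial bound in general. Within each round every mobile performs one best reply, scoring its (at most) $n$ candidate BSs via \eqref{eq:HedonicDecisionPreferenceFunction}; maintaining the running per-group power sums so that each candidate is evaluated in constant amortized time, a single best reply costs $O(n)$. Summing over the $m$ mobiles in a round yields $m\,O(n)$, and over $\rho(m,n)$ rounds the total is $\rho(m,n)\,m\,O(n)$, as claimed. I would flag that the $O(n)$ per-move bound is valid only under this incremental maintenance of the group sums; a naive recomputation would cost $O(mn)$ per move and inflate the second factor accordingly.
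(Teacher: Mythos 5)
Your proposal is correct, and it takes a genuinely more self-contained route than the paper. The paper's own proof is essentially a two-line accounting plus a citation: it defines $\rho(m,n)$ as the (random) number of rounds to reach equilibrium, observes that in each round the $m$ mobiles each perform one $O(n)$ maximization, multiplies to get $\rho(m,n)\,m\,O(n)$, and for the completeness claim simply invokes the known result that computing pure Nash equilibria of potential games is PLS-complete \cite{bib:ThecomplexityofpureNashequilibria}, supplemented by the empirical observation that $\rho$ grows logarithmically in the simulated scenarios. You instead unpack what that citation leaves implicit: an explicit PLS-membership argument (solutions are strategy-tuples, objective is the potential \eqref{eq:potentialfunction}, neighbourhood is single-mobile deviation, with Theorem~\ref{thm:GisAPotentialGame} guaranteeing that the improving-move oracle is exactly one best-reply step), and an explicit hardness reduction embedding party-affiliation/local max-cut as the $n=2$ instance of $\mathcal{G}$. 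This buys a self-contained completeness proof and an honest caveat the paper never states, namely that hardness is established for the general separable-symmetric class, not necessarily for the specific utility $v(x,y;j)=\theta\min(p_{xj},p_{yj})$ actually deployed; the paper's blanket appeal to potential games is, strictly read, too coarse, since not every potential game is PLS-hard. Two small points of calibration: your claim of constant amortized time per candidate BS is slightly optimistic, because $\sum_{y\in G_j^{\sigma}}\min(p_{xj},p_{yj})$ depends on $x$ and requires sorted prefix sums (so $O(\log m)$ per candidate) rather than a single running sum --- though the paper itself asserts the $O(n)$ best-reply cost with no justification at all, so your flagged bookkeeping assumption is already more scrupulous than the original; and where you assert that $\rho$ admits no polynomial bound in general, the paper instead treats $\rho$ as an empirical random variable, both readings being consistent with the theorem since $\rho$ is left as a free parameter in the stated bound.
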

\begin{proof}
We denote by $ \rho(m,n) $ the rounds needed for converging to the Nash equilibrium. Note that this is a random variable. In each step, only one mobile performs a maximization operation which causes totally $ m O(n) $ complexity. This is repeated during $ \rho(m,n) $ rounds. So, the time required can be given by $ T(m,n) = \rho(m,n)mO\left(n\right) $. Actually, the complexity of finding the Nash equilibria in a potential game is known to be PLS-complete, refer to \cite{bib:ThecomplexityofpureNashequilibria}.

Furthermore, the computational results achieved in next section highlight  that \textit{the number of rounds grows logarithmically with respect to the number of mobiles and BSs} (Figure \ref{fig:meanNumOfRoundsVSArea}).
\end{proof}

%
%
%
%

\section{Simulation Results}
\label{sec:simulationresults}
\subsection{Downlink System Model}
For small-cells, the cellular network model consists of BSs arranged according to an homogeneous Poisson point process $\Phi$ of intensity $\lambda_b$ $(points/m^2)$ in the Euclidean plane \cite{bacelli}. For macro-cells, we use the classical honeycomb model to represent a well structured network made of large cells. 

Also, we consider an independent collection of mobile users, located according to some independent homogeneous Poisson point process with intensity $\lambda_m$ $(points/m^2)$. The main weakness of the Poisson model is that because of the independence of the Poisson point process, two BSs have a non null probability to be located very close to each other. This weakness is balanced by two strengths: the natural inclusion of different cell sizes and shapes and the lack of edge effects, i.e. the network extends indefinitely in all directions \cite{andrewsbacelli}. The expected value of a homogeneous Poisson point process is ${\bf E}[\Phi] = \lambda A$, where $A \subset \Re^2$ denotes some area. 

Moreover, the deployment scenario used to generate Figures \ref{fig:pAvVsLambdaBNBS}, \ref{fig:pAvVsLambdaBwithC0}, \ref{fig:meanPowerVSLambdamSmallCell}, \ref{fig:meanPoverMaxThetaPVSThetaSmallCell} corresponds to small-cells. We assume $P_r = -80$ $dBm$, being the typical maximum received signal power of a wireless network as well as we set arbitrarily $P_{ij} = \infty$ if $P_{ij} \geq 20$ $dBm$ and we set the path loss exponent $\alpha = 3$. We also assume an equal operational power cost for all BS, $P_0 = 12$ $W$. 

\subsection{Performance Results}
We compare the proposed algorithms for different values of $\lambda_m$ and $\lambda_b$. 
The average total power was calculated by Monte Carlo simulations by running
the algorithms for different generated power cost matrices
and taking the mean of the results.

We first start with macro-cells. As mentioned above, we use a honeycomb model with a cell radius equal to $4 km$. Table \ref{tab:comperisonOfHMNBS} presents the optimal set-covering (SC) result as well as those obtained with the different proposed algorithms for different realization of power cost matrices and when the operational power cost is null, i.e. $P_0 = 0$ for all BSs. The transmission power $P_{ij} = \infty$ if $P_{ij} \geq 48$ $dBm$. It turns out that the HD algorithm is very efficient  and converges to nearly optimal assignments when operational power costs are neglected. This result indicates that switching off some BS does not decrease the total power significantly. The NBS algorithm also produces near optimal results in many examples. In these practical scenarios, the idea of switching off some BS is mostly interesting when the circuit power is dominant, which is not the case for macro-cells. Further, as fast moving users are associated with macro-cells in priority, it seems reasonable to keep all active. Therefore, the NBS algorithm is efficient and the gain achievable with any other optimal algorithm is marginal. 

Table \ref{tab:comperisonOfHMCCDCCNBS} compares the SC optimal results with all developed algorithms introduced in the paper for small-cells scenarios. The operational power cost is set to $P_0 = 12$ $W$ and the BS density is increased. CC and DCC algorithms produce optimal assignments for almost all examples. However, the DCC algorithm naturally performs worse when the number of BSs increases. Moreover, the NBS algorithm exhibits worst results which highlights the interest of switching off some BSs when the operational power cost is higher than the transmission power cost. 

Figure \ref{fig:pAvVsLambdaBNBS} illustrates the results achieved with the NBS algorithm, for different densities of BS and mobiles. This figure highlights an intuitive property of the NBS approach. When no operational cost is considered (see Figure \ref{fig:pAvVsLambdaBNBS}), the power consumption decreases with the BS density, since the average distance between mobiles and BSs decreases accordingly. On the opposite, when an operational cost is considered, the NBS algorithm leads to an increased energy consumption since the number of active BS increases accordingly.

Let us now switch to the CC and DCC algorithms. Figure \ref{fig:pAvVsLambdaBwithC0} focuses only on scenarios with circuit power, because they correspond to the more relevant cases. the upper curves show that the CC algorithm achieves much better results than NBS, especially when the BS density is high. This algorithm privileges solutions with larger cells. The distributed version DCC performs worst than the centralised one but still better than the NBS.

Figure \ref{fig:meanPowerVSLambdamSmallCell} plots the change of the average total power with respect to the intensity of mobiles for small-cells scenario. The assumptions are as following: $\lambda_b = 1.11 \times 10^{-5} \frac{points}{m^2}$, $\theta = 0.002$ (in Figure \ref{fig:meanPoverMaxThetaPVSThetaSmallCell}, the optimal $ \theta $ is found), and area $A = 6.25 km ^2$. Note that the HD algorithm performs efficiently even though it is decentralized. For example, in case of $ \lambda_m = 8\times 10^{-8} $, the average number of mobiles is given by $8\times 10^{-8}\cdot 6.25\times 10^{6} = 50$; thus, the average power used per mobile is calculated as following: a) the HD algorithm: $ 63/50 = 1.26 W $, b) the CC algorithm: $ 55/50 = 1.1 W $, c) the greedy-SC algorithm: $ 50/50= 1 W$, d) the SC algorithm: $ 43.83/50 = 0.88 W $.
 
Figure \ref{fig:meanPoverMaxThetaPVSThetaMacroCell} depicts the change of the average total power with respect to the intensity of mobiles for macro-cell deployment. $ \lambda_b = \frac{80 points}{3600 km^2} $, $ A = 3600 km^2 $ ($ 60 km \times 60 km $ area), and $ \theta = 0.21 $ (in Figure \ref{fig:meanPoverMaxThetaPVSThetaMacroCell}, we plot the change of average total power with respect to $ \theta $, and choose the optimal value). Here, we observe that the HD algorithm produces remarkable results. Calibrating $ \theta $ properly is significant, otherwise the HD algorithm may not converge to the near optimal results. On the other hand, the NBS algorithm is also efficient in the macro-cell deployment. The drawback of greedy-SC algorithm reveals here since it works with a mechanism where the larger cells are privileged.

In Figures \ref{fig:meanPoverMaxThetaPVSThetaSmallCell} and \ref{fig:meanPoverMaxThetaPVSThetaMacroCell}, the normalized average total power is plotted with respect to $ \theta $. From the figures and our observations in experiments performed in MATLAB, it might be considered that $ \theta $ is mainly affected by the area over which the algorithm runs. For example, in Figure \ref{fig:meanPoverMaxThetaPVSThetaMacroCell}, the normalized average total power has a minimum in the same value of intensity of BSs, but it moves to a higher value when the area is enlarged from $ 2500 km^2 $ to $ 3600 km^2 $.

Figure \ref{fig:meanNumOfRoundsVSArea} shows the change of the average number of rounds of the HD algorithm for converging to a Nash equilibrium with respect to the area. The figure implies that the average number of rounds has a logarithmic characteristic. Moreover, when the operational power costs are zero, the average number of rounds increases since smaller cells are formed; therefore, the HD algorithm needs more rounds to converge to a Nash equilibrium. 

\begin{table*}
\caption{$A=2500 km^2$, $\lambda_b = \frac{6 points}{2500 km^2}$,
$\lambda_m = \frac{1 point}{25 km^2}$, $P_0=0$ $W$, $\theta = 0.11$.}
\centering
\label{tab:comperisonOfHMNBS}
\scriptsize
\begin{tabular}{| c || c | c | c | c | c | c | c | c | c |}
\hline
Ex. $i$ & $m$ & $n$ &  & HM & SC & CC & DCC & NBS & HD\\
\hline
\hline
1 & 54 & 6 & $10^{-7}\times$ & 294.0360 & 294.0360 & 294.0360 & 295.5705 & 294.0360 & 294.0360\\
\hline
2 & 43 & 6 & $10^{-7}\times$ & 299.8159 & 299.8159 & 323.6194 & 352.1657 & 299.8159 & 299.8159\\
\hline
3 & 46 & 6 & $10^{-7}\times$ & 250.4830 & 250.4830 & 271.4828 & 271.4828 & 250.4830 & 250.4830\\
\hline
4 & 41 & 6 & $10^{-7}\times$ & 270.4417 & 270.4417 & 302.8145 & 287.9684 & 284.5738 & 283.0740\\
\hline
5 & 51 & 6 & $10^{-7}\times$ & 307.7226 & 307.7226 & 361.7673 & 361.7673 & 320.2662 & 307.7226\\
\hline
6 & 45 & 6 & $10^{-7}\times$ & 278.5206 & 278.5206 & 317.5086 & 317.5086 & 278.5206 & 278.5206\\
\hline
7 & 41 & 6 & $10^{-7}\times$ & 305.1243 & 305.1243 & 345.3096 & 345.3096 & 306.8924 & 312.2107\\
\hline
8 & 34 & 6 & $10^{-7}\times$ & 221.5681 & 221.5681 & 236.7168 & 236.7168 & 256.0736 & 221.5681\\
\hline
9 & 58 & 6 & $10^{-7}\times$ & 360.3562 & 360.3562 & 363.1885 & 363.1885 & 360.3562 & 360.3562\\
\hline
10 & 52 & 6 & $10^{-7}\times$ & 310.0721 & 310.0721 & 310.0721 & 310.0721 & 332.0735 & 332.0735\\
\hline
11 & 44 & 6 & $10^{-7}\times$ & 283.9244 & 283.9244 & 313.8116 & 339.6551 & 283.9244 & 283.9244\\
\hline
12 & 49 & 6 & $10^{-7}\times$ & 312.9718 & 312.9718 & 312.9718 & 312.9718 & 325.5796 & 337.9936\\
\hline
13 & 53 & 6 & $10^{-7}\times$ & 229.9336 & 229.9336 & 255.3125 & 255.3125 & 229.9336 & 232.6530\\
\hline
14 & 60 & 6 & $10^{-7}\times$ & 308.6002 & 308.6002 & 308.6002 & 308.6002 & 320.2059 & 320.2059\\
\hline
15 & 57 & 6 & $10^{-7}\times$ & 329.2667 & 329.2667 & 342.8460 & 354.0650 & 329.2667 & 335.9528\\
\hline
\end{tabular}
\end{table*}
\begin{table}
\caption{$A=4$ $km^2$, $\lambda_b = \frac{6 points}{4 km^2}$,
$\lambda_m = \frac{18 points}{4 km^2}$, $P_0=12$ $W$, $\theta = 0.003$.}
\centering
\label{tab:comperisonOfHMCCDCCNBS}
\scriptsize
\begin{tabular}{| c || c | c | c | c | c | c | c | c |}
\hline
Ex. $i$ & $m$ & $n$ & HM & SC & CC & DCC & NBS & HD\\
\hline
\hline
1  &  14 & 4 & 24.153 & 24.153 & 24.153 & 24.153 & 48.183 & 24.153\\
\hline
2  & 14 & 7 & 24.140 & 24.140 & 36.226 & 36.226 & 72.139 & 24.152\\
\hline
3  &  23 & 7 & 36.138 & 36.138 & 48.177 & 48.275 & 84.042 & 48.185\\
\hline
4  &  16 & 7 & 36.195 & 36.195 & 36.195 & 36.257 & 72.208 & 48.204\\
\hline
5  & 19 & 4 & 24.109 & 24.109 & 24.109 & 36.174 & 48.126 & 36.184\\
\hline
6  &  7 & 3 & 12.070 & 12.070 & 12.070 & 12.070 & 24.073 & 12.070\\
\hline
7  &  15 & 8 & 36.178 & 36.178 & 36.178 & 36.222 & 84.241 & 36.222\\
\hline
8  &  15 & 7 & 24.110  & 24.110 & 24.110 & 24.144 & 72.101 & 24.154\\
\hline
9  & 18 & 9 & 36.099 & 36.099 & 36.148 & 36.253 & 84.042 & 36.148\\
\hline
10 &  24 & 6 & 36.111 & 36.111 & 36.191 & 36.191 & 60.134 & 36.191\\
\hline
11 & 21 & 3 & 24.149 & 24.149 & 24.149 & 24.149 & 36.162 & 24.149\\
\hline
12 & 18 & 10 & 36.111 & 36.111 & 36.111 & 48.141 & 84.053 & 48.152\\
\hline
13 & 23 & 5 & 36.105 & 36.105 & 36.105 & 48.143 & 60.218 & 36.105\\
\hline
14 & 13 & 4 & 24.082 & 24.082 & 24.104 & 24.104 & 48.123 & 24.110\\
\hline
15 & 17 & 8 & 36.190 & 36.190 & 36.190 & 36.217 & 60.217 & 36.190\\
\hline
\end{tabular}
\end{table}

\begin{table}
\caption{$\lambda_b = 0.10\times 10^{-3}\frac{points}{m^2}$,
$\lambda_m = 1.11\times 10^{-3} \frac{points}{m^2}$, $P_0=12$ $W$, $\theta = 0.008$.}
\centering
\label{tab:numOfRounds}
\begin{tabular}{| c || c | c | c | c |}
\hline
\multicolumn{1}{|c||}{} & \multicolumn{4}{c|}{Number of rounds}\\
\hline
Example/Area($km^2$) & $0.98$ & $1.28$ & $1.62$ & $2.00$\\
\hline
\hline
1  &  3 & 2 & 3 & 3 \\
\hline
2  & 3 & 3 & 3 & 3 \\
\hline
3  &  3 & 3 & 2 & 3 \\
\hline
4  &  2 & 3 & 3 & 3 \\
\hline
5  & 3 & 2 & 3 & 3 \\
\hline
6  &  3 & 3 & 3 & 3 \\
\hline
7  &  3 & 3 & 3 & 4 \\
\hline
8  &  3 & 3 & 3  & 4 \\
\hline
9  & 3 & 3 & 3 & 3 \\
\hline
10 &  3 & 3 & 3 & 4 \\
\hline
\end{tabular}
\end{table}

\begin{figure}
    \centering    \includegraphics[width=\linewidth]{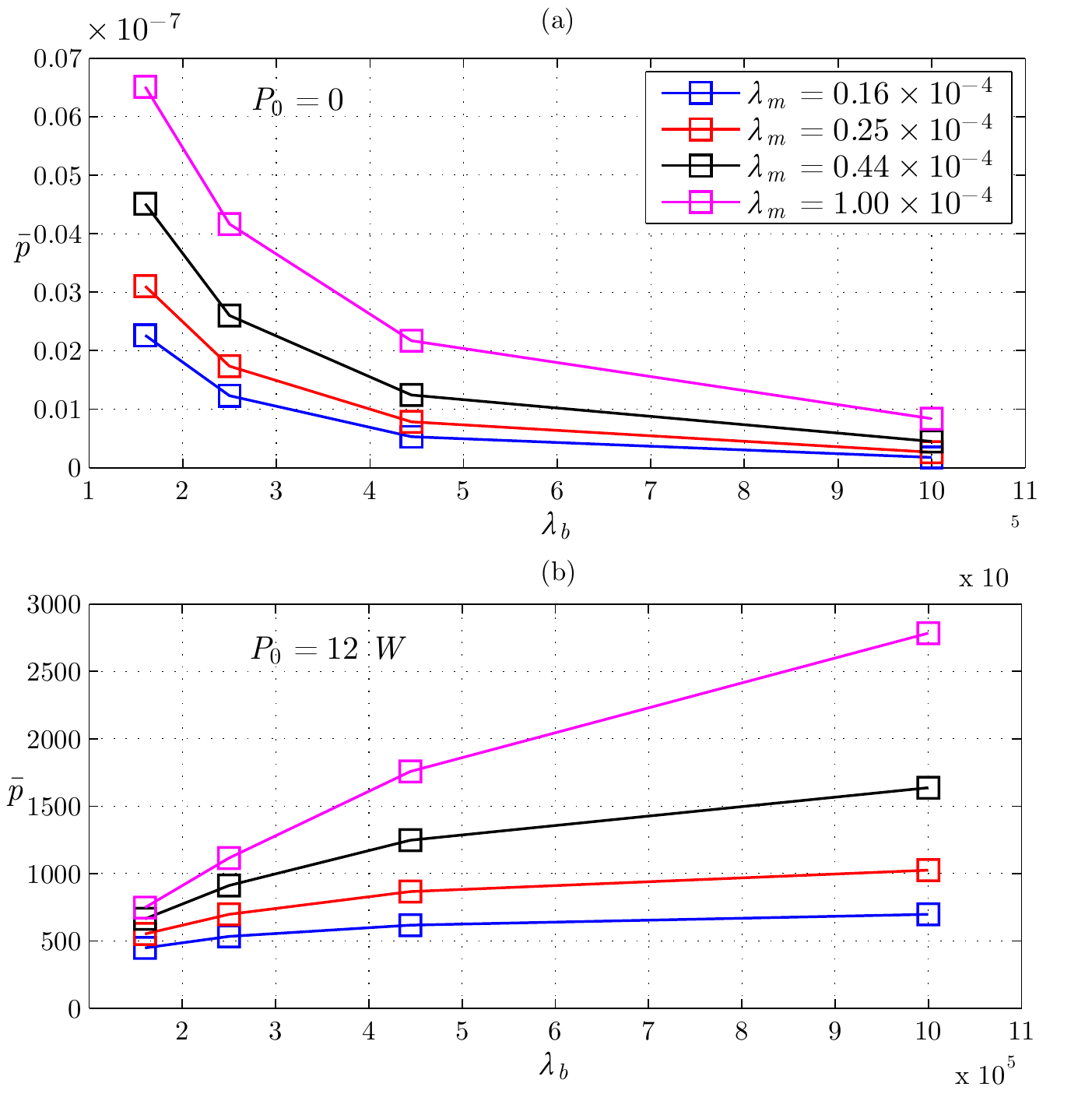}
\caption{The NBS Algorithm: Average total power $\bar{p}$ with respect to intensity of BSs $\lambda_b$ for increasing values of intensity of mobiles $\lambda_m = (0.16, 0.25, 0.44, 1.00)\times 10^{-4}$ $\frac{points}{m^2}$, $A = 4$ $km^2$.}\label{fig:pAvVsLambdaBNBS}
\end{figure}
\begin{figure}
    \centering    \includegraphics[width=\linewidth]{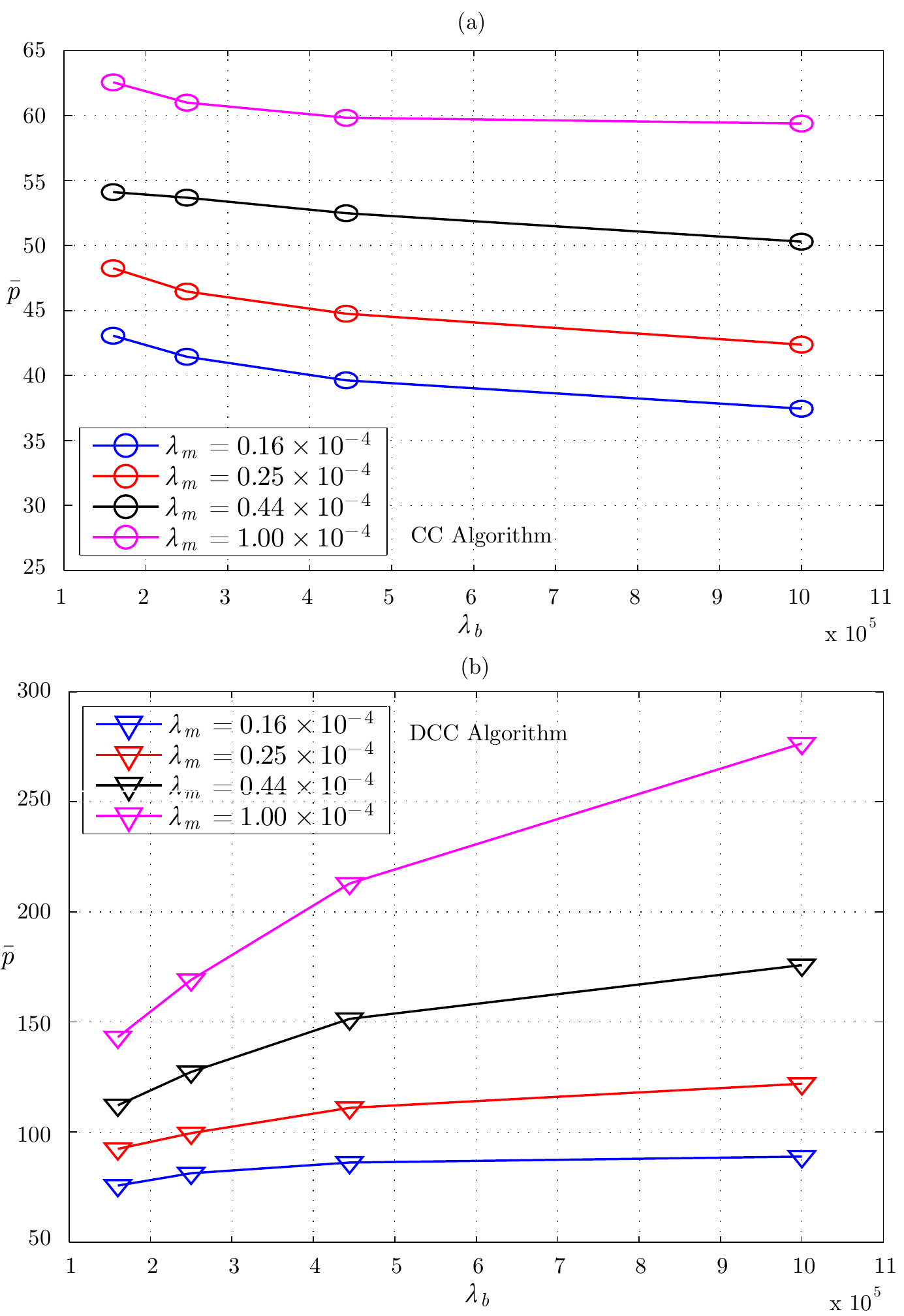}
\caption{Average total power $\bar{p}$ with respect to intensity of BSs $\lambda_b$ for increasing values of intensity of mobiles $\lambda_m = (0.16, 0.25, 0.44, 1.00)\times 10^{-4}$ $\frac{points}{m^2}$, $A = 4$ $km^2$, 
$P_0=12$ $W$.}\label{fig:pAvVsLambdaBwithC0}
\end{figure}
\begin{figure}
    \centering    \includegraphics[width=\linewidth]{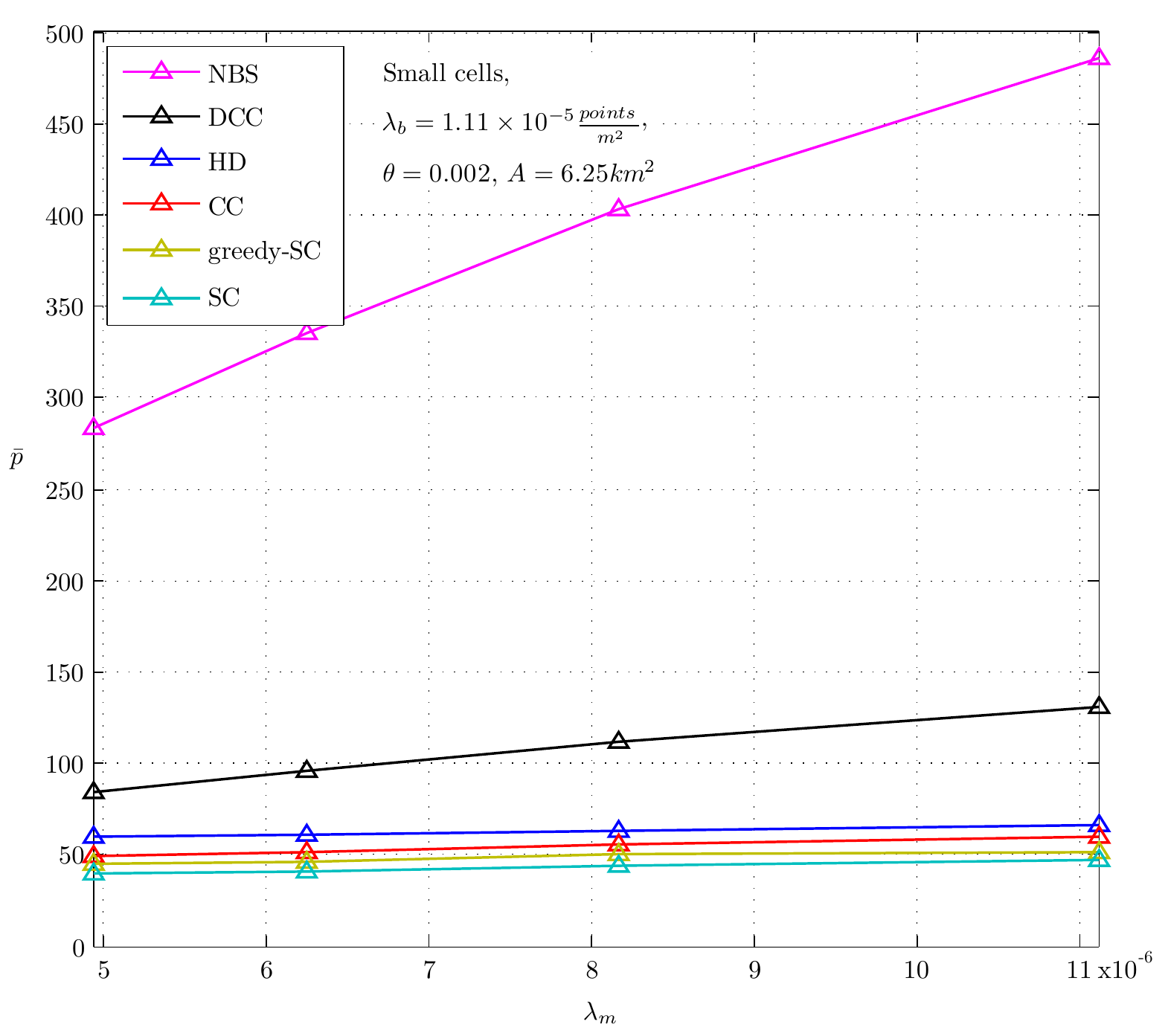}
\caption{Small cells: Change of the average total power $\bar{p}$ with 
respect to the intensity of mobiles.}
\label{fig:meanPowerVSLambdamSmallCell}
\end{figure}
\begin{figure}
    \centering    \includegraphics[width=\linewidth]{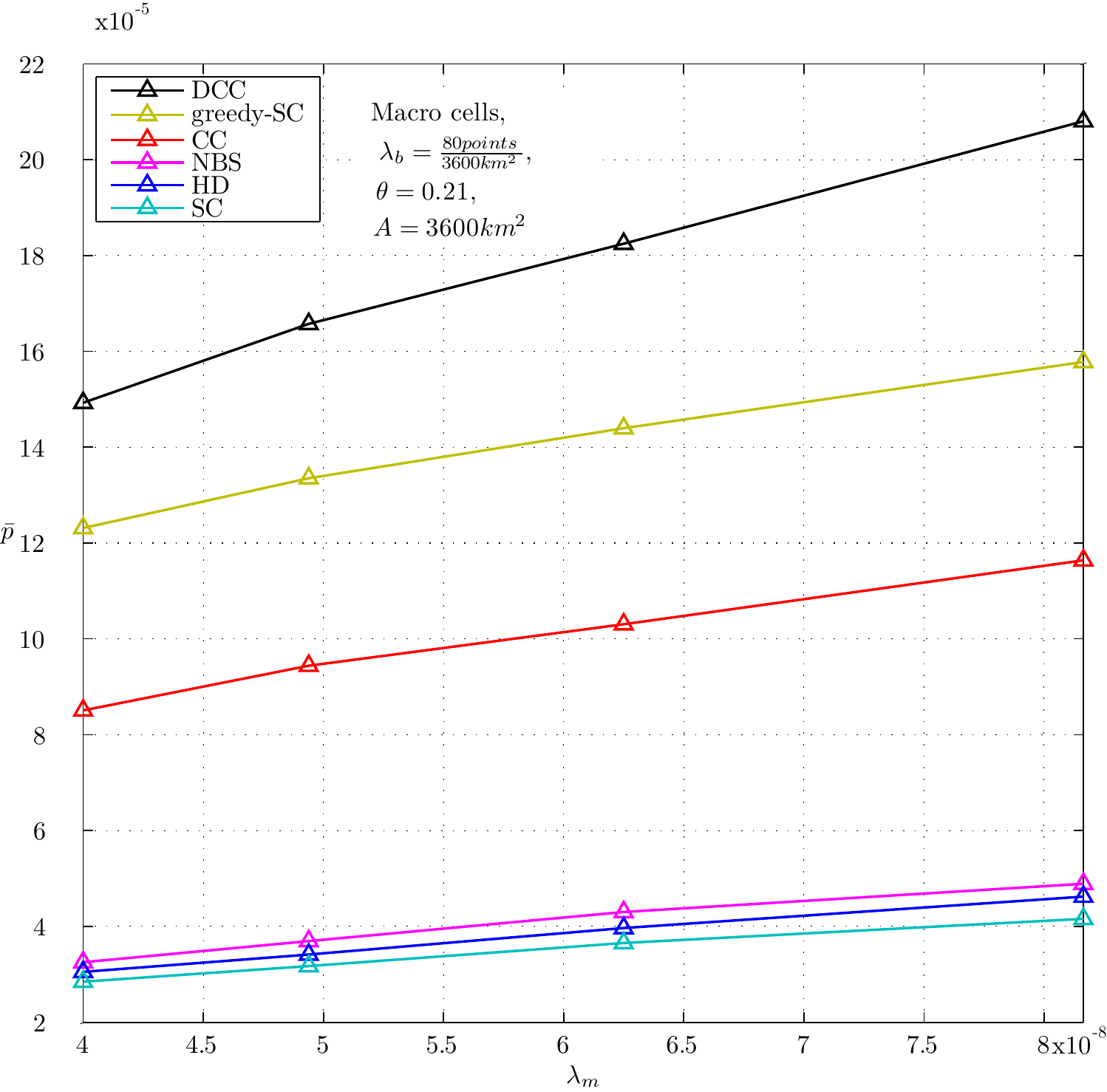}
\caption{Macro cells: Average total power $\bar{p}$ with 
respect to the intensity of mobiles.}
\label{fig:meanPowerVSLambdamMacroCell}
\end{figure}
\begin{figure}
    \centering    \includegraphics[width=\linewidth]{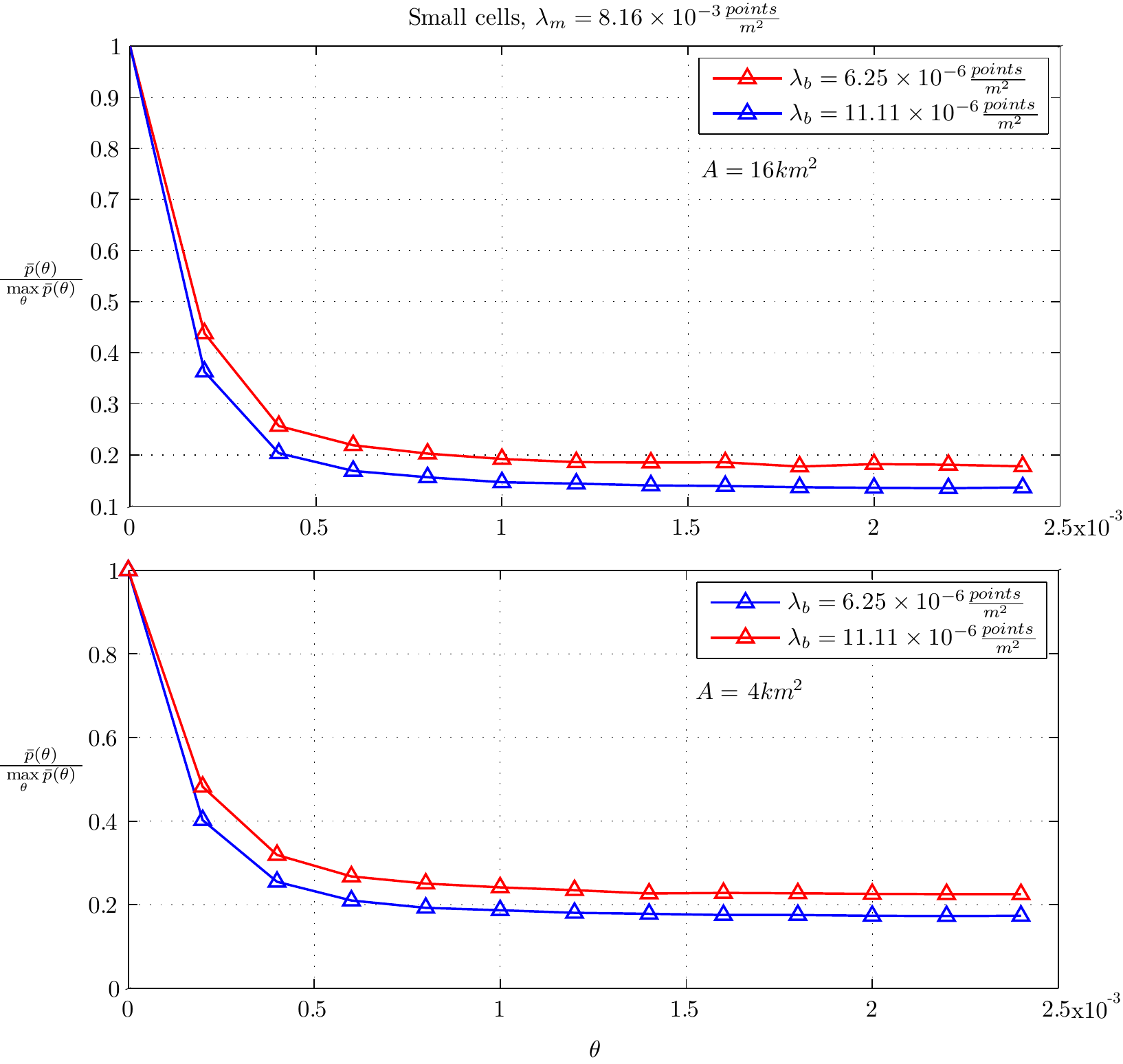}
\caption{Small cells: Normalized average total power with respect to $\theta$.}
\label{fig:meanPoverMaxThetaPVSThetaSmallCell}
\end{figure}
\begin{figure}
    \centering    \includegraphics[width=\linewidth]{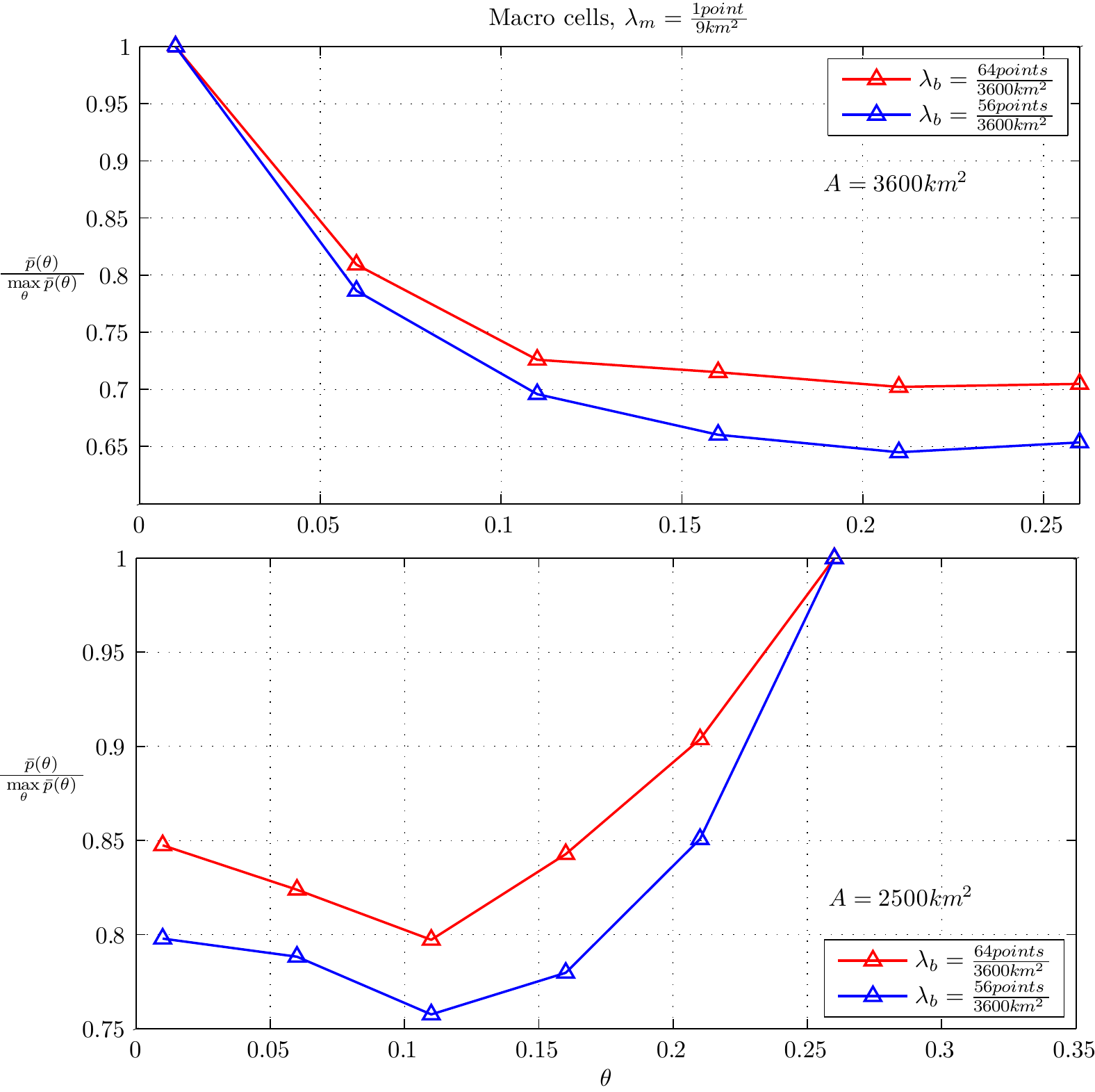}
\caption{Macro cells: Normalized average total power with respect to $\theta$.}
\label{fig:meanPoverMaxThetaPVSThetaMacroCell}
\end{figure}

\begin{figure}
    \centering    \includegraphics[width=\linewidth]{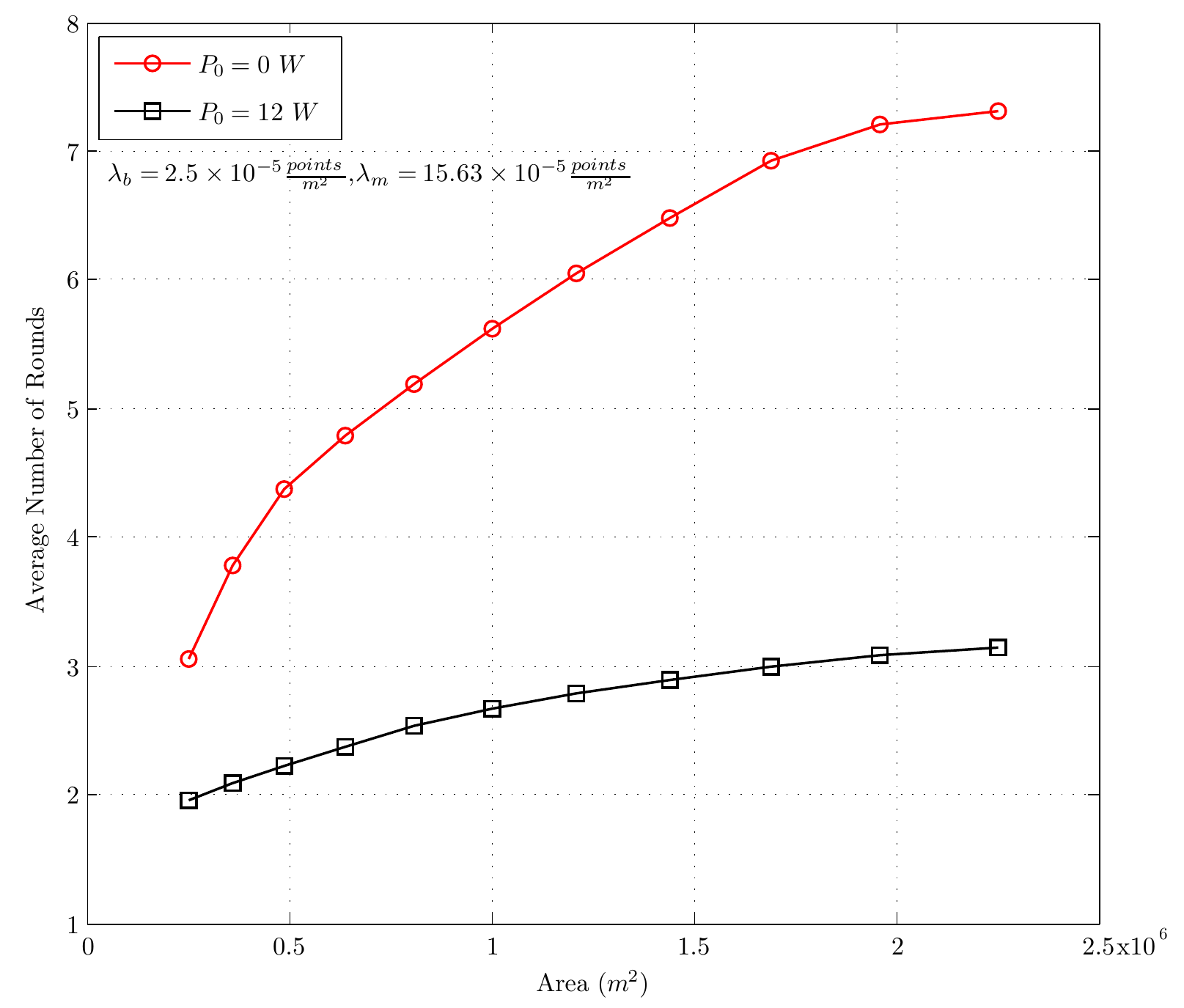}
\caption{Average number of rounds with respect to the area.}
\label{fig:meanNumOfRoundsVSArea}
\end{figure}

\section{Conclusion and Future Works}
\label{sec:conclusions}
This paper addressed the MAP problem in the context of broadcast transmission. We introduced a novel decentralized solution based on group formation games, which we named the hedonic decision (HD) algorithm. This formalism is constructive: a new class of group formation games is introduced where the utility of players within a group is separable and symmetric being a generalization of party affiliation games. We proposed a centralized optimal recursive algorithm (the HM) as well as a centralized polynomial-time heuristic algorithm (the CC). The results exhibit that the HD algorithm achieves very good results if the parameter $ \theta $ is well chosen. The exact value of $ \theta $ is not provided and may be used as a setting parameter. 

The proposed HD algorithm is efficient and may be used in many other set covering problems. For instance, indoor wireless network planning has been studied for several years and the optimal BS activation is an important problem. The proposed HD algorithm could be used  for planning purposes and thus optimizing the number of BS, but could also be used to optimize dynamically the number of active BS as a function of actives users. The provider may deploy a high density of BS and then could run dynamically the HD algorithm to optimize then number of active BSs.

Furthermore, it could be interesting to change the game parameters. One can define different clustering weights for each BS. That setting could provide better results. It is also possible to try different symmetric bipartite utility allocations.


\begin{thebibliography}{1}
\bibitem{LTEbook} 
S. Sesia, I. Toufik, and M. Baker, \emph{LTE  The UMTS Long Term Evolution: From Theory to Practice}. John Wiley \& Sons, Ltd., 2009.
\bibitem{SPLPsurvey}
J. Krarup, P.M. Pruzan, ``The simple plant location problem: Survey and synthesis,'' \textit{European Journal of Operational Research}, vol. 12, pp. 36--81, 1983.
\bibitem{bilo}
V. Bil\`{o}, I. Caragiannis, C. Kaklamanis, and P. Kanellopoulos,
``Geometric clustering to minimize the sum of cluster sizes,'' \emph{In
Proc. 13th European Symp. Algorithms, vol.3669 of LNCS},
pp.460--471, 2005.
\bibitem{Lev-TovandPeleg}
N. Lev-Tov and D. Peleg, ``Polynomial time approximation schemes for base station coverage with minimum total radii,'' \emph{Computer Networks}, vol.47, no.4, pp.489--501, Mar. 2005.
\bibitem{mincostcoverage}
H. Alt, E. M. Arkin, H. Br\"{o}nnimann, J. Erickson, S. P. Fekete, C. Knauer, J. Lenchner, J. S. B. Mitchell, and K. Whittlesey, ``Minimum-cost coverage of point sets by disks,'' \emph{In Proceedings of ACM Symposium on Computational Geometry (SCG '06)}, pp. 449--458, New York, NY, USA, 2006.
\bibitem{coveringpoints}
S. Funke, S. Laue, Z. Lotker, and R. Naujoks, ``Power assignment problems in wireless communication: Covering points by disks, reaching few receivers quickly, and energy-efficient travelling salesman tours,''  \emph{In Proceedings of Ad Hoc Networks}, pp.1028--1035, 2011.
\bibitem{RodopluMeng}
V. Rodoplu, T.H. Meng, ``Minimum energy mobile wireless networks,'' \emph{IEEE Journal on Selected Areas in Communications}, vol.17, no.8, pp.1333--1344, Aug. 1999.
\bibitem{wieselthier}
J. E. Wieselthier, G. D. Nguyen, and A. Ephremides, ``On the construction of energy-efficient broadcast and multicast trees in wireless networks,''. \emph{IEEE INFOCOM 2000}, pp.586-594, Tel Aviv, Israel, 2000.
\bibitem{egecioglu}
\"{O}. E\u{g}ecio\u{g}lu, and T. Gonzalez, ``Minimum-energy broadcast in simple graphs with limited node power,'' \emph{IASTED International Conference on Parallel and Distributed Computing and Systems (PDCS 2001)}, pp.334-338, Anaheim, CA, Aug. 2001.
\bibitem{CagaljHubaux}
M. Cagalj, J. P. Hubaux, and C. Enz, ``Minimum-energy broadcast in all-wireless networks: NP-completeness and distribution issues,'' \emph{In Proceedings of ACM international conference on Mobile computing and networking (MobiCom)}, New York, NY, USA, pp.172--182, 2002.
\bibitem{hasanaltmangorce}
C. Hasan, E. Altman, and J.M. Gorce, ``A coalition game approach to the association problem of mobiles in broadcast transmission,'' \emph{WiOpt 2011} pp.236--240, 9--13 May 2011.
\bibitem{surveyofgreennetworking}
A. P. Bianzino, C. Chaudet, D. Rossi, J. L. Rougier, ``A survey of green networking research,'' \emph{IEEE Communications Surveys and Tutorials}, vol.14, no.1, pp.3--20, 2012.
\bibitem{thefutureofSCN}
H. Claussen, ``The future of small cell networks,'' \emph{IEEE COMMSOC MMTC E-Lett.} [Online], pp.32--36, Sept. 2010. Available: http://committees.comsoc.org/mmc/e-news/E-Letter-September10.pdf
\bibitem{BalasPadbergSetPartitioning}
E. Balas and M. W. Padberg, ``Set partitioning: A survey,'' \emph{SIAM Review}, vol.18, pp.710--760, 1976.
\bibitem{LundYannakakis}
C. Lund, M. Yannakakis, ``On the hardness of approximating minimization problems,'' \emph{Journal of the ACM} vol.41, no.5, pp.960--981, Sept. 1994.	 
\bibitem{infocom}
C. K. Singh and E. Altman, ``The wireless multicast coalition game and the non-cooperative association problem,'' in \textit{30th IEEE INFOCOM}, Shanghai, China, 10--15, Apr. 2011.
\bibitem{chvatal}
Chv\'{a}tal, V., ``A greedy heuristic for the set-covering problem,'' \emph{Mathematics of Operations Research}, vol. 4, no. 3, pp. 233--235, 1979.
\bibitem{bib:HollardGroupFormationGame}
G. Hollard, ``On the existence of a pure strategy Nash equilibrium in group
formation games,'' \textit{Elsevier Economics Letters}, vol. 66, pp. 283--287, 2000.
\bibitem{bib:KonishiGroupFormationGame}
Konishi, H., Le Breton, M., Weber, S.,``Pure strategy Nash equilibrium in a group formation game with positive
externalities,'' \textit{Games and Economic Behavior} vol. 21, 161--182.
\bibitem{bib:MilchtaichGroupFormationGame}
I. Milchtaich, ``Stability and segregation in group formation,'' \textit{Games and Economic Behavior}, vol. 38, pp. 318--346, 2002.
\bibitem{bib:PotentialGames}
D. Monderer and L. S. Shapley, ``Potential games,'' \textit{Games and Economic Behavior} vol. 14, pp. 124--143, 1996.
\bibitem{rosenthal}
R. W. Rosenthal, ``A class of games possessing pure-strategy Nash equilibria,'' \emph{International Journal of Game Theory} vol. 2, 65--67, 1973.
\bibitem{bib:partyAffiliation}
J. H. Aldrich, W. T. Bianco, ``A game-theoretic model of party affiliation of candidates and office holders,'' \textit{Mathematical and Computer Modelling}, vol. 16, no. 8-9, pp. 103-116, Aug.-Sept., 1992.
\bibitem{bib:SelfishSetCovering}
D. Leibovic and E. Willett, ``Selfish set covering,'' \textit{In Midstates Conference for Undergraduate Research in Computer Science and Mathematics}, Nov. 2009.
\bibitem{bib:CostSharingandStrategyproofMechanismsforSetCoverGames}
X.-Y. Li, Z. Sun, W. Wang, ``Cost sharing and strategyproof mechanisms for set cover games,'' \textit{Lecture Notes in Computer Science}, vol. 3404, pp. 218-230, 2005.
\bibitem{bib:MechanismDesignForSetCoverGames}
X.-Y. Li, Z. Sun, W. Wang, X. Chu, S. J. Tang, P. Xu, ``Mechanism design for set cover games with selfish element agents,'' \textit{Theoretical Computer Science}, vol. 411, no. 1, pp. 174--187, 2010.
\bibitem{complexityofmaximum}
M. Blum, R. W. Floyd, V. Pratt, R. L. Rivest, and R. E. Tarjan, ``Time bounds for selection,'' \emph{J. Comput. Syst. Sci.} vol.7, no.4, pp.448--461, Aug. 1973.
\bibitem{bib:ApproximateClustering}
T. Feder, and D. Greene, ``Optimal algorithms for approximate clustering,'' \textit{Proceedings of the twentieth annual ACM symposium on Theory of computing}, pp. 434--444, 1988.
\bibitem{bib:ThecomplexityofpureNashequilibria}
A. Fabrikant, C. Papadimitriou, and K. Talwar, ``The complexity of pure Nash equilibria,'' \textit{In Proceedings of the thirty-sixth annual ACM symposium on Theory of computing (STOC '04)}, New York, NY, USA, 604--612, 2004.
\bibitem{bacelli}
F. Baccelli, and B. B{\l}aszczyszyn, ``Stochastic geometry and wireless 
networks: volume I theory,'' \emph{Foundations and Trends in Networking}, 
vol.3, pp.249--449, 2009.
\bibitem{andrewsbacelli}
J. G. Andrews, F. Bacelli, and R. K. Ganti, ``A tractable approach to coverage and rate in cellular networks,'' \emph{IEEE Trans. on Commun.}, vol.59, no.11, pp.3122--3134, Nov. 2011.
\bibitem{HoydisDebbah}
J. Hoydis, M. Debbah, ``Green, cost-effective, flexible, small cell 
networks,'' \emph{IEEE ComSoc MMTC E-Letter Special Issue on ``Multimedia Over 
Femto Cells''}, 2010.
\end{thebibliography}
\end{document}